\newcommand{\E}{\mathcal{E}}
\renewcommand{\E}{\mathbb{E}}
\DeclareMathOperator{\Var}{Var}
\newtheorem{prop}{Proposition}[section]
\newtheorem{theo}[prop]{Theorem}
 \renewcommand*{\@font@warning}[1]{}
\def\DD{\displaystyle} 
\DeclareMathOperator*{\lra}{\longrightarrow}
\DeclareMathOperator*{\argmin}{argmin}
\newcommand\egaldef{\stackrel{\mbox{\upshape\tiny def}}{=}}
\newcommand{\1}{\leavevmode\hbox{\rm \small1\kern-0.35em\normalsize1}}
\newcommand{\ind}[1]{\1_{\{#1\}}}
\def\DD{\displaystyle} 
\date{Mai 2010}
\author{
Cyril Furtlehner
  \thanks{INRIA Saclay -- LRI, Bat. 490, Universit\'e Paris-Sud -- 91405
  Orsay cedex (France)}
\and 
Jean-Marc Lasgouttes
  \thanks{INRIA Paris Rocquencourt -- Domaine de Voluceau B.P.\ 105 -- 78153 Le Chesnay cedex (France)}
\and 
Maxim Samsonov\footnotemark[1]
}
\title{One-dimensional Particle Processes with Acceleration/Braking Asymmetry}
\begin{document}
\maketitle 

\abstract{The slow-to-start mechanism is known to play an important role 
in the particular shape of the Fundamental diagram of traffic and 
to be associated to hysteresis effects of traffic flow.
We study this question in the context of exclusion and queueing processes,
by including an asymmetry between deceleration and 
acceleration in the formulation of these processes. For exclusions processes, this 
corresponds to a multi-class process with transition asymmetry 
between different speed levels, while for queueing processes we consider non-reversible 
stochastic dependency of the service rate w.r.t the number of clients.
The relationship between these 2 families of models is analyzed on the ring geometry, 
along with their steady state properties. Spatial condensation phenomena
and metastability is observed, depending on the level of the aforementioned asymmetry.
In addition we provide a large deviation formulation of  the fundamental diagram (FD) 
which includes the level of fluctuations, in the canonical ensemble
when the stationary state is expressed as a product form of such generalized queues.}

\section{Introduction}

In the study of models for traffic~\cite{Schrek}, properties of the
FD, which gives  a relation between the traffic flux and the density or
alternatively the dependence between the speed and the flux, or the
speed and the density, play an important role. 
In the three phases traffic theory of Kerner~\cite{Kerner}, the
traffic phase diagram on highways consists of the free flow, the
synchronized flow and the wide moving jam. In the free flow regime, at
low density, the flux is simply proportional to the density of cars;
in the congested one, at large density, massive clusters of cars are
present, and the flux decreases more or less linearly with this
density; in the intermediate regime, the relation between flux and
density is largely of stochastic nature, due to the presence of a
large amount of small clusters of cars propagating at various random
speeds. It is not clear however whether in this picture these phases,
and especially the synchronized flow phase, are genuine dynamical or
thermodynamical phases, meaningful in some large size limit in the
stationary regime, or are intricate transient features of a slowly
relaxing system. In fact there is still controversy about the reality
of the synchronized phase of Kerner at the moment~\cite{ScHe}.

In order to analyze properties of the FD from a statistical physics perspective, 
we  look for a simple extension of the totally asymmetric exclusion process (TASEP) 
as well as  the zero range process able to encode the fact that vehicles in the traffic 
may accelerate or brake. Some asymmetry between these two actions
is empirically known to be responsible of the way spontaneous congestion occurs. 
This is referred as the slow-to-start mechanism, not present in the original 
cellular automaton of Nagel-Schreckenberg~\cite{NaSch}, but in its refined version 
like the velocity dependant randomized one (VDR) \cite{VDR} which exhibits a first 
order phase transition between the fluid and the congestion phase and hysteresis 
phenomena \cite{Blank} associated to metastable states.
So a realistic stochastic model should have in particular the property that
some spontaneous symmetry breaking among identical vehicles can occur,
as can be seen experimentally on a ring geometry for example~\cite{Sugiyama}.
With such a model at hand we would like to provide a method to compute the
FD and study emergence of non-trivial collective behaviors at macroscopic level.

The paper is organized as follows: in Section~\ref{sec:masep} 
we start by defining a simple and somewhat minimal
generalization of TASEP  and describe some of  its basic properties. 
In Section~\ref{sec:stoch-queues} a family of zero-range processes is introduced
which service rate depends stochastically on the state of the queue and on which 
there are relevant mappings of multi-speed TASEP;
we determine sufficient conditions for having a product form for the invariant 
measure of such processes coupled in tandem. 
In Section~\ref{sec:fd} we provide a  large deviation formulation giving 
the FD along with fluctuations on the ring geometry in the canonical ensemble, 
when the steady state has a product form.
Finally, in Section~\ref{sec:single-jam} 
we analyze the jam structure of a generalized
queueing process corresponding to the model introduced in Section~\ref{sec:masep}.
and interpret direct simulations of the multi-speed TASEP in this light.

\section{Multi-speed exclusion processes}\label{sec:masep}
\subsection{Model definition}\label{sec:model}
The model we investigate is a multi-type exclusion process,
generalizing the simple exclusion process on the line, introduced by
Spitzer in the 70's~\cite{Ligett,Spitz}, combined with some feature of
the Nagel-Schreckenberg cellular automaton~\cite{NaSch}.

In the totally asymmetric version of the exclusion 
process~\cite{Ligett,Spitz} (TASEP),
particles move randomly on a 1-d lattice, always in the same
direction, hopping from one site to the next one within a time interval
following a Poisson distribution and conditionally that the next site
is vacant. In the Nagel-Schreckenberg cellular automaton, the dynamics
is in parallel: all particles update their positions at fixed time
intervals, but their speeds are encoded in the number of steps that
they can take.  This speed can adapt stochastically, depending on the
available space in front of the particle.

The model that we propose combines the braking and accelerating
feature of the Nagel-Schreckenberg models, with the locality of the
simple ASEP model, in which only two consecutive sites do interact at
a given time. The trick is to allow each car to change stochastically
its hopping rate, depending on the state of the next site.  For a
$2$-speed model, let $A$ denotes sites occupied by fast vehicles, $B$
sites occupied by slow ones and let $O$ denote empty sites; the
model is defined by the following set of reactions,
involving pairs of neighbouring sites:

\begin{align}
AO &\lra^{\mu_{a}} OA \qquad \text{simple move of fast vehicle}\label{AOOA} \\[0.1cm]  
BO &\lra^{\mu_{b}} OB \qquad \text{simple move of slow vehicle}\label{BOOB} \\[0.1cm]
BO &\lra^{\gamma} AO  \qquad \text{slow vehicle spontaneously accelerates}\label{BOAO}     \\[0.1cm]
A\bar O &\lra^{\delta} B\bar O\qquad A\ \text{brakes behind}\ \bar O=A\ \text{or}\ B  \label{ABBB}
\end{align}
$\mu_a$, $\mu_b$,  $\gamma$ and $\delta$ denote the transition
rates, each transition corresponding to a Poisson event. The dynamics
is therefore purely sequential as opposed to the dynamics of the
Nagel-Schreckenberg model. It encode the fact that a slow vehicle tends to accelerate
when there is space ahead, while in the opposite case, corresponding
to (\ref{ABBB}) it tends to slow down.
The main mechanism behind congestion, namely the asymmetry between
braking and acceleration is potentially present in the model
when $\gamma$ is different from $\delta$.
Our model  is in fact similar to the model of Appert-Santen~\cite{ApSa}, in which 
there is a single speed, but particles have 2 states: at rest and moving
with possible transitions between these 2 states.

To define fully the model, its boundary conditions have to be
specified, either periodic in the ring geometry or with edges. In the
latter case, additional incoming and outgoing rates have to be
specified, depending on whether we want to model a traffic light a
stop or simply a segment of highway for example. In this paper we will
consider only the ring geometry. In the following we will refer to 
this specific model as the acceleration-braking totally asymmetric exclusion
process (ABTASEP).
 
\subsection{Known special cases}
Let us first notice that this model contains and generalizes several
sub-models which are known to be integrable with particular rates. The
hopping part (\ref{AOOA},\ref{BOOB}) of the models is just the
TASEP when
$\mu_{a}=\mu_{b}$, which is known to be integrable. Its
generalization to include multiparticle dynamics with overtaking is
the so-called Karimipour model~\cite{Cantini,Karim}
which turns out to be integrable as well. 
Matrix Ansatz method allows in some cases, like
the ASEP with open boundary conditions, to describe the stationary
regime of the models, using the representations of the so-called
diffusion algebras~\cite{DeEvHaPa}.  The acceleration/deceleration
dynamics is equivalent to the coagulation/decoagulation models, which
are known to be solvable by the empty interval method and by free
fermions for particular sets of rates \cite{SaFuLa}, but 
the whole process is presumably not integrable.

\subsection{Relation to tandem queues}\label{sec:qmap}
In some cases, the model can be exactly reformulated in terms of
generalized queueing processes (or zero range processes in the
statistical physics parlance), where service rates of each queue
follows as well a stochastic dynamics~\cite{tgf07}. In this previous
work we however considered exclusion processes involving three
consecutive sites interactions in order to maintain the homogeneity of 
labels in particle clusters. The mapping works only on the ring geometry, 
by identifying queues either
with
\begin{itemize}
\item[(i)] cars: clients are the empty sites.
\item[(ii)] empty sites: clients are the vehicles;
\end{itemize}
In our case the mapping of type (i) is exact. In the corresponding queueing process, 
queues are associated either with  fast or slow cars, having then service rates
$\mu_a$ or $\mu_b$. Slow queues become fast at rate $\gamma$ conditionally 
to having at least one client, while empty fast queues become slow at rate
$\delta$.

The mapping of type (ii), is more informative with respect to jam 
distribution, but  is not possible with transitions limited to
$2$-consecutive sites interactions, because in that case
homogeneity is not maintained in the clusters, and additional information 
to the number of cars and the rate of the car leaving the queue is 
needed to know the service rate of the queue. For this mapping  we have
to resort to some approximate procedure as shall be exemplified in 
Section~\ref{sec:single-jam}.

Another way to circumvent this would be to start from a 
a slightly different definition of our initial model, by not attaching
speed labels to cars, but instead to empty sites. Let $A$ and $B$ denote empty site with respectively 
fast and slow speed, and $V$ denote sites occupied with a vehicle, the set of transitions reads then
\begin{equation}\label{def:model2}
\begin{cases}
\DD VA \lra^{\mu_{a}} AV \qquad \text{simple move on fast site} \\[0.1cm]  
\DD VB \lra^{\mu_{b}} BV \qquad \text{simple move on slow site} \\[0.1cm]
\DD VA \lra^{\delta} VB  \qquad \text{fast site become slow}     \\[0.1cm]
\DD \bar V B \lra^{\gamma} \bar VA\qquad \text{slow site become fast}          
\end{cases}
\end{equation}
where $\bar V$ is an unoccupied site i.e. either $A$ or $B$.
The mechanism in this model is that the  empty sites visited more recently have a
slower associated speed  than others, leading possibly to congestion instability.
The mapping to tandem queues is then straightforward: queues are associated to empty 
site of type $A$ or $B$ with corresponding service rate $\mu_a$  or $\mu_b$; fast queues
with at least one client have a probability per unit of time $\delta$ to become a slow queue, 
while slow queue which are empty become fast  with probability rate $\gamma$.
This model generalizes directly to any speed levels. Interestingly, one see that 
in such a model  the service rate of queues are somehow related to the lifetime of clusters.
In this paper we will focus on  model of Section~\ref{sec:model}, deferring for future work 
the study of model~(\ref{def:model2}).

\subsection{Numerical observations}
Based on numerical simulations on the ring geometry, we make some
observations concerning the phenomenology of the
model, depending on the parameters. This is illustrated on
Figure~\ref{fig:st-plot}.
\begin{figure}[ht]
\centering
\includegraphics[height=3cm,width=12cm,angle=0]{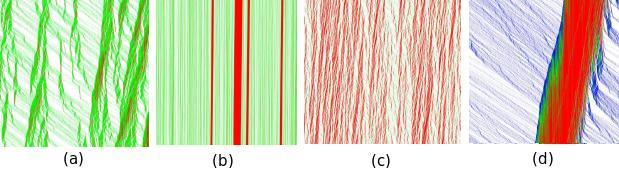}\\
\caption{\label{fig:st-plot} Space-time diagram 
  with 2 speed levels (a)-(c) and 3 in (d). Time is   going downward and particles to the right.  
Red, green and blue
  represent different speeds in increasing order.  The size of the
  system is $3000$ except for (b) where it is $100000$. Setting are
  $\mu_a=100$, $\mu_b = 10$, $\gamma = 10$, $\delta=1$
  for (a) and (b) and $\delta=10$ for (c), all with density $\rho
  = 0.2$.}
\end{figure}
When no asymmetry between braking and accelerating is present, as in
TASEP on a ring, no spontaneous large jam structure is observed.  
As the density $\rho$ of cars increases, one observes a smooth transition
between a TASEP of fast particles for small $\rho$ to a TASEP of slow
particles around $\rho\simeq 1$. Instead, when the ratio
$\delta/\gamma$ is reduced, there is a proliferation of small jams.
Below some threshold of this ratio, we observe a condensation
phenomenon, associated to some critical value of the density: above
this critical value, after a time which presumably scales as
some power of $L$, one or more large jams, which absorb a finite fraction of
the total number of cars may develop. If there are many of such large
jams, a competition occurs, which end up possibly in the long term with one
single large condensate depending on the relative  value of $\gamma$ and $\mu_b$ 
and also of the size $N$ of the system. It is tempting to interpret this 
as a condensation mechanism at equilibrium in the canonical ensemble~\cite{EvMaZi}, 
by combining the Nagel-Paczuski~\cite{NaPa} interpretation
of competing queues with some results in a previous work~\cite{tgf07},
which in the context of tandem queues on a ring, allows this
condensation mechanism to take place if the apparition of slow vehicle is a
sufficiently rare event. As such, the mechanism for congestion is 
then understood qualitatively as follows:
assume that a jam of size $s$ appears, created somewhere by a mutation
$A\lra B$. Let $t(s)$ be the expected waiting time in this queue. The
probability that a vehicle is still of type $A$ when it reaches the
end of the queue is simply
\[
p(\mu =\mu_a) = e^{-\delta t(s)} = p_s.
\]
Assuming that $s$ does not change much during $t(s)$, then we have
\[
t(s) = s(\frac{1}{\mu_a}p_s +\frac{\mu_a+\gamma}{\mu_a(\mu_b+\gamma)}(1-p_s)),
\]
if we take into account the fact that the vehicle may accelerate at rate $\gamma$ before leaving the 
queue. Anyway, this gives $t(s)$ self-consistently and yields qualitatively that, at
the beginning of the process, there is a distribution of jams with
various effective service rates. As time evolves, either long-lived
jams with effective service rates $\mu_a(\mu_b+\gamma)/(\mu_a+\gamma)$ 
are able to survive, then new small jams can never appear 
because their effective service rates are strictly larger, so in the end there is a competition
between the existing jams; the largest one eventually remains alone after
waiting a large amount of time. Typically this happens when $\delta \ll \mu_b$ and
$\delta < \gamma$ (Figure~\ref{fig:st-plot}(b)).
If the situation with a single jam is not stable, then
no large jam may develop at all, and only small fluctuations are to be
observed. When  $\delta > \gamma$ one likely observes the kind of jams seen in
Figure~\ref{fig:st-plot}(c).

\begin{figure}[ht]
\centering
\includegraphics[scale=0.2]{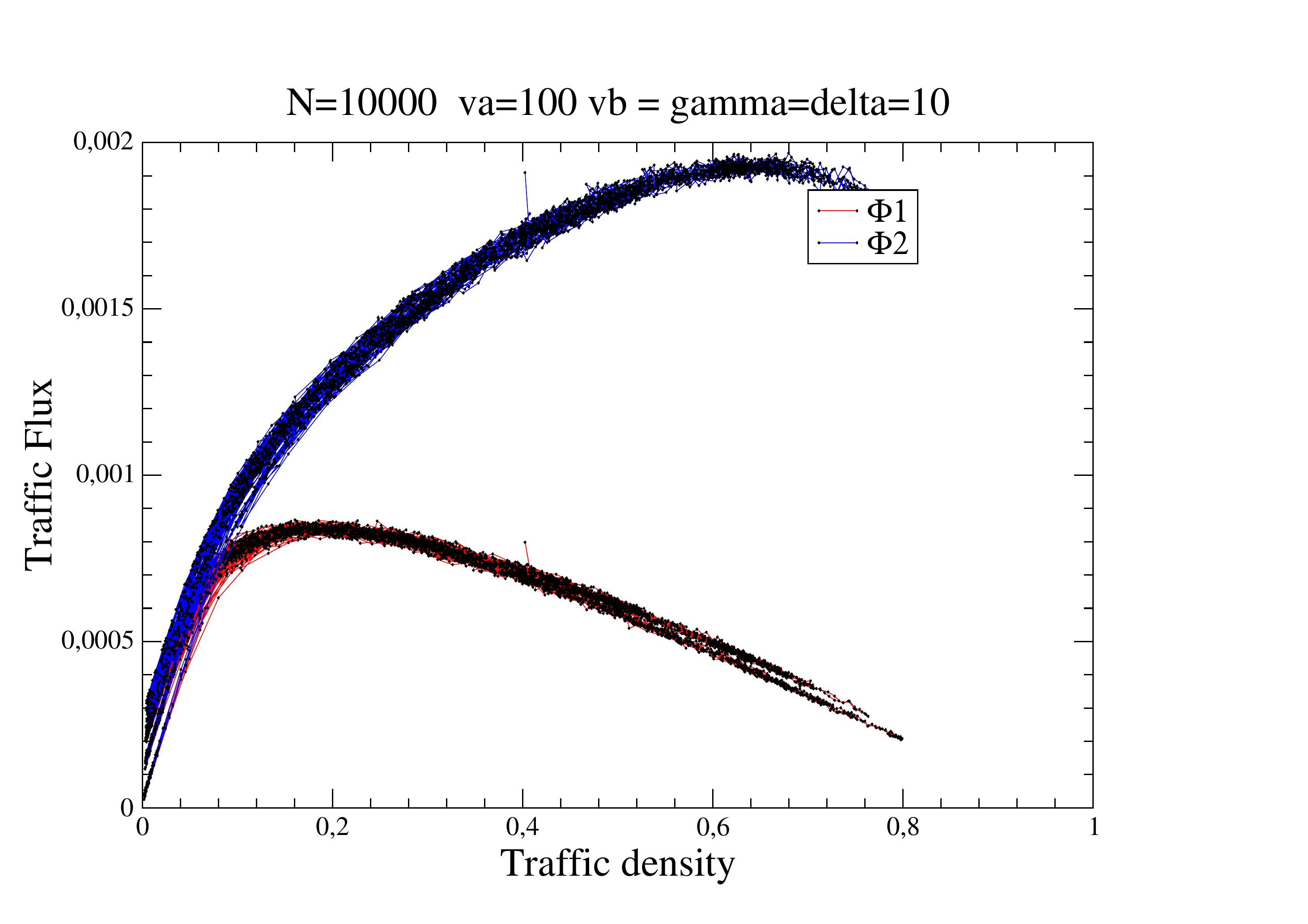}
\includegraphics[scale=0.2]{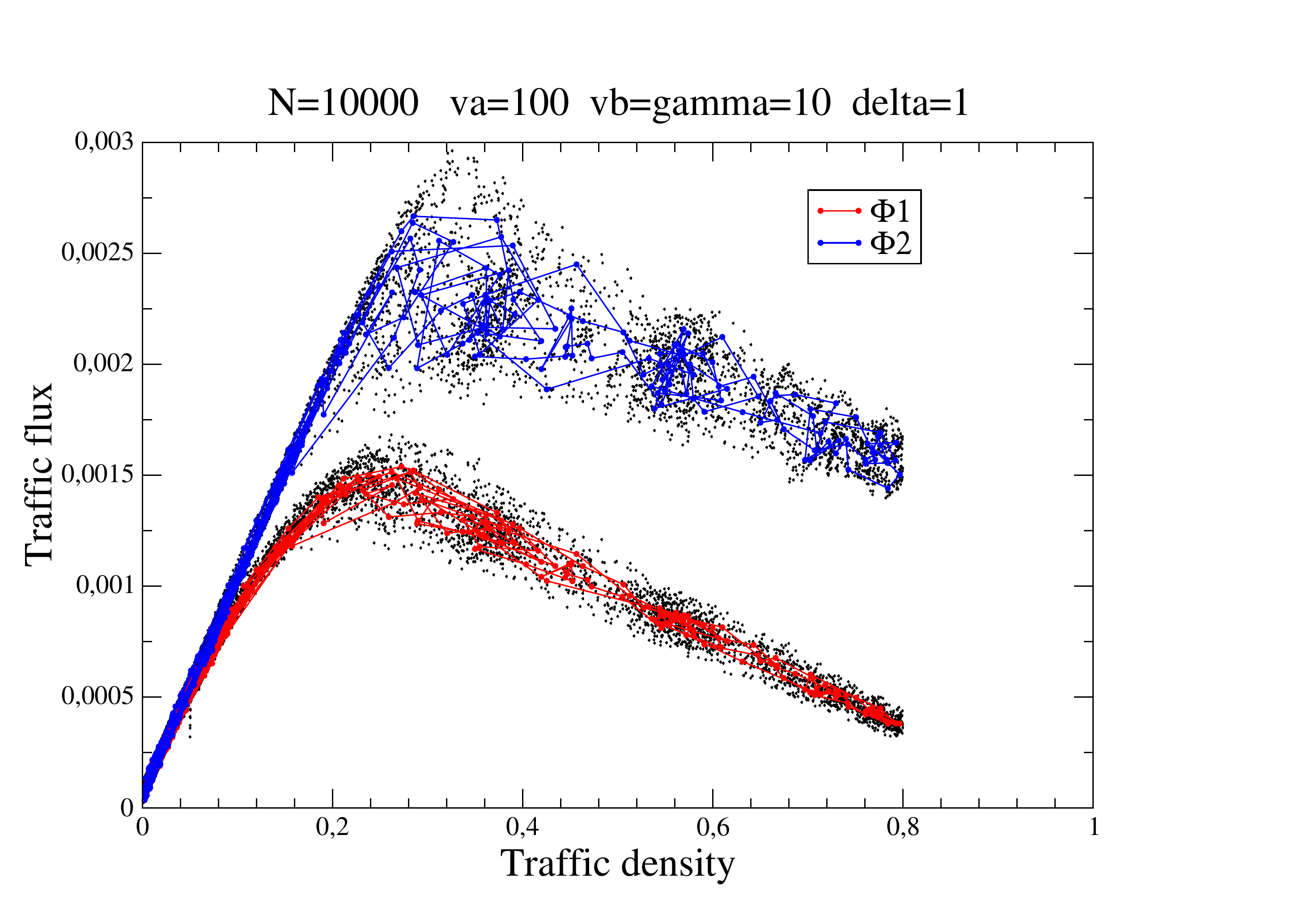}
\caption{\label{fig:fdiag0} Fundamental diagram with $\gamma=\delta$ (left) and
$\gamma=10\delta$ (right) with small spontaneous rate of emission and escape of particles.
Trajectories of $\Phi_1\egaldef \sum_{i=0}^{N-1} \mu_a A_iO_{i+1}+\mu_b B_iO_{i+1}$ and
$\Phi_2\egaldef \sum_{i=0}^{N-1} \mu_a A_i+\mu_b B_i$ are displayed.}
\end{figure}

The effect of the asymmetry is also clearly observed on Figure~\ref{fig:fdiag0},
where we allow particle to enter or quit the system with some very low probability rate compared 
to others. The global density of cars then performs a random walk, and by looking at trajectories in the 
FD plane, we see hysteresis effect for $\delta < \gamma$. 
We have also simulated a model with $3$ speed levels, see
Figure~\ref{fig:st-plot}(d). In that case, small jams may have
different speeds, depending on which type of slow car is leading. Then
a cascade mechanism takes place, slow speed regions generate even
slower speed clusters of cars and so on, and some kind of synchronized
flow is observed.

To conclude this section, let us finally remark that
Figure~\ref{fig:st-plot}(a) is very reminiscent of
coagulation-decoagulation process, which is somewhat expected from the
previous discussion, since it is present in the equations. One could
therefore expect the jam structure to share some properties with the
directed percolation process taking place on the ring geometry.

\subsection{Hydrodynamic equations and solitons}
Although it is not clear whether the hydrodynamic limits is valid in our context, 
one can at least write down the corresponding equations on the density $\rho_a(x,t)$
and $\rho_b(x,t)$, with $x=i/N$ and $N$ large,  expected in such a limit:
\begin{align}
\partial_t\rho_a + v_a\partial_x(\rho_a\bar\rho) &= -\delta\rho_a\rho + 
\gamma\rho_b\bar\rho\label{eq:hydroa}\\[0.2cm]
\partial_t\rho_b + v_b\partial_x(\rho_b\bar\rho) &= \delta\rho_a\rho - \gamma\rho_b\bar\rho\label{eq:hydrob}
\end{align}
with $\rho=\rho_a+\rho_b$ the total density of cars, $\bar\rho = 1-\rho$
the density of empty sites and
\[
v_{a,b} \egaldef \lim_{N\to\infty}\frac{\mu_{a,b}^{(N)}}{N}
\]
if the rates $\mu_{a,b}^{(N)}$ are allowed to be rescaled when $N$ is varied.
This is an hyperbolic system of equations since the 
matrix  
\[
M = \left[
\begin{matrix}
v_a(1-2\rho_a-\rho_b) & -v_a\rho_a\\
-v_b\rho_b & v_b(1-\rho_a-2\rho_b)
\end{matrix}\right]
\]
has two real eigenvalues 
\[
\lambda^{\pm} = \frac{1}{2}\bigl(v_a+v_b-v_a(2\rho_a+\rho_b)-v_b(\rho_a+2\rho_b)\pm\sqrt\Delta,
\]
with 
\[
\Delta = \bigl(v_a(1-2\rho_a-\rho_b)-v_b(1-\rho_a-2\rho_b)\bigr)^2+4v_a v_b\rho_a\rho_b\ >0.
\]
The method of characteristic applied to this system is however not giving much insight into it, 
because of the non-linear shape of the characteristics. 
\begin{figure}[ht]
\centerline{\resizebox*{.5\textwidth}{!}{\input{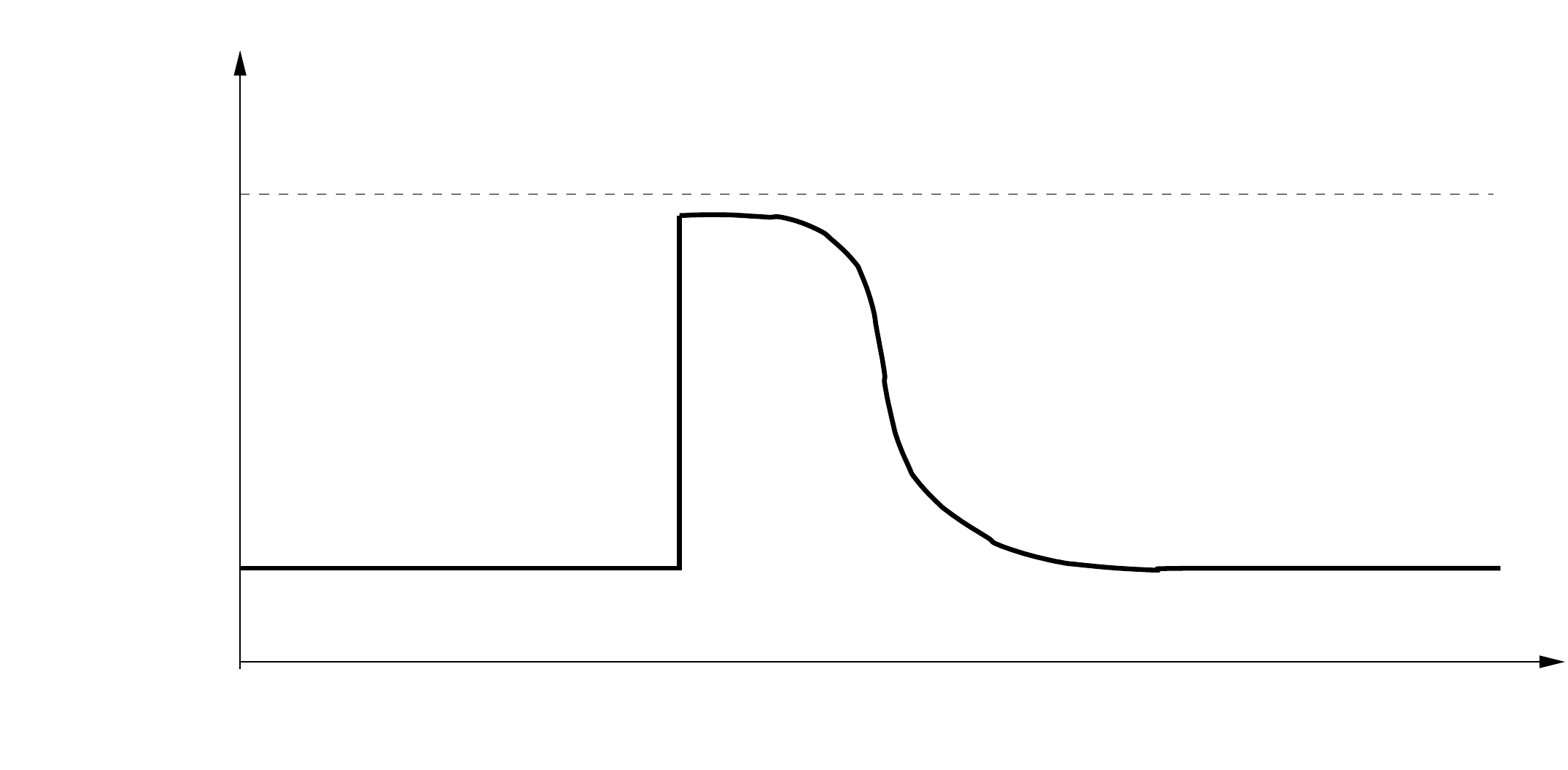_t}}}
\caption{\label{fig:soliton}}
\end{figure}
At steady state we can nevertheless look for backward travelling waves solutions, of the 
form $\rho_{a,b}(x+vt)$, with $v$ the propagation velocity. 
Replacing $\partial_t$ by $v\partial_x$ in (\ref{eq:hydroa},\ref{eq:hydroa})
and summing up the two equations one can check that the flux
\begin{equation}\label{eq:phi}
\Phi \egaldef (v_a\rho_a + v_b\rho_b)\bar\rho + v\rho,  
\end{equation}
is a conserved quantity
\[
\frac{d\Phi}{dx} = 0.
\]
This allows to express $\rho_a$ and $\rho_b$ as function of $\rho$:
\begin{equation}
\rho_a = \frac{\Phi-\rho v-v_b\rho\bar\rho}{(v_a-v_b)\bar\rho}\qquad\text{and}\qquad
\rho_b = \frac{\Phi-\rho v-v_a\rho\bar\rho}{(v_b-v_a)\bar\rho}\label{eq:rhob}
\end{equation}
After rescaling w.r.t $v_a$ the model has only $3$ independent parameters:
\[
\eta\egaldef \frac{v_b}{v_a},\qquad \kappa\egaldef \frac{\delta}{\gamma},\qquad
\nu\egaldef\frac{\gamma}{v_b},
\]
and we rescale as well the soliton speed $\upsilon\egaldef v/v_a$ and the flux
$\phi\egaldef\Phi/v_a$.
Replacing $\rho_a$ and $\rho_b$ with their expressions in (\ref{eq:rhob}), 
we obtain an equation relating $\partial_x\rho$ and $\rho$:
\[
\frac{d\rho}{dx} = -C \bar\rho\ \frac{P(\rho)}{Q(\rho)},
\]
with 
\[
C = \frac{\nu}{2}(1-\eta\kappa)
\]
and where $P$ and $Q$ are both polynomials of degree $3$:
\begin{align*}
P(\rho)&\egaldef \bar\rho(\rho-\rho^-)(\rho-\rho^+)+
\bigl(\frac{\nu}{2}(1-\kappa)\bar\rho+\kappa\nu\bigr)(\phi-\upsilon)\\[0.2cm]
Q(\rho)&\egaldef \bar\rho^2(\rho-\rho_1)+(\phi-\upsilon)\frac{\upsilon}{2\eta}
\end{align*}
where
\[
\rho_1 = 1+\frac{1+\eta}{\eta}\upsilon,\qquad
\rho^\pm =\frac{1+(1-\kappa)\upsilon\pm\sqrt{\Delta}}
{2(1-\eta\kappa)},
\]
are the roots when $\phi=\upsilon$
with the discriminant
\[
\Delta = (1-(1+\kappa)\upsilon)^2+4(\eta-\upsilon)\kappa\upsilon,
\]
assumed to be positive. Assuming $\kappa<1$ we have $\rho^-\in[0,1]$ and $\rho^+\ge 1$
[resp. $\rho^+\in]\rho^-,1]$] when $\upsilon\le\eta$ [resp. $\upsilon\ge\eta$].
$1/C$ gives the length scale of the interface between 
regions of different densities. In particular if $\mu_{a,b}$ are maintained at a fixed 
value the interface becomes a step function when $N$ is large.

Let $\rho_{max}$ and $\rho_{min}$ the extremal values of the density. We expect a solution
which decreases from $\rho_{max}$ to $\rho_{min}$ with $R(\rho)$ staying strictly positive  
for $\rho\in[\rho_{min},\rho_{max}]$. 
Since $R(\rho)$ is a single valued function of $\rho$, there must be a discontinuity, i.e. a shock to 
relate these two values as depicted on Figure~\ref{fig:soliton}. We don't expect everything
to be discontinuous at this point, in particular, as long as the shock is not 
at rest ($v=0$) the proportion of fast cars $u\egaldef \rho_a/\rho$
should be continuous, since cars are slowing down at finite pace $\delta$.
This means that $\rho_{min}$ and $\rho_{max}$ are not 
independent. Instead, letting $u^*$ the value of $u$ at the shock, from (\ref{eq:rhob}), 
they are both solution of ($\bar u^* \egaldef 1-u^*$)
\[
(v_au^*+v_b\bar u^*)\rho^2-(v_au^*+v_b\bar u^*+\upsilon)\rho+\phi = 0.
\]
Given that the sum and product of the two roots have to be smaller than $2$
and $1$ respectively, we deduce that $\upsilon$ and $\phi$ are both strictly smaller 
than $v_au^*+v_b\bar u^*$. In addition the constraint $\rho_{max}\le 1$ implies
$\phi\ge \upsilon$.
As a consequence, for $\rho\in[0,1[$ we have $Q(\rho)>0$,  vanishing only at 
$\rho=1$ in the limit case $\Phi=\upsilon$. Concerning $P(\rho)$ 
it has then $3$ roots $\rho_i, i\in\{1,2,3\}$ with 
$0<\rho_1<\rho_2<1<\rho_3$. 
To obtain a general solution, there are two free parameters, $\upsilon$ and $\rho_{min}$ and therefore 
$2$ constraints,  associated to fixing the length  of the system to one 
the mean density to $\rho_0$. These read
\[
\int_{\rho_{min}}^{\rho_{max}} \frac{d\rho}{\bar\rho}\frac{Q(\rho)}{P(\rho)} = C\qquad\text{and}\qquad
\int_{\rho_{min}}^{\rho_{max}} d\rho\frac{\rho}{\bar\rho}\frac{Q(\rho)}{P(\rho)} = C\rho_0.
\]
When condensation occurs we expect to observe a full congestion ($\rho=1$) at some point.
This entails from (\ref{eq:phi}) that $\phi=\upsilon$, 
yielding in that case the equation:
\[
\frac{d\rho}{dx} = C\frac{(\rho^+-\rho)(\rho-\rho^-)}{\rho-\rho_1}.
\]
and 
\[
\rho_a = \frac{\upsilon-\eta\rho}{1-\eta}\qquad\text{for}\qquad \rho<1.
\]
On one hand, we should have $\rho^+>1$ to be able to have a point where $\rho=1$, 
consequently $\upsilon$ is necessarily smaller than $\eta$;
on the other hand $\rho_a$ becomes negative when $\rho\to 1$ if $\upsilon<\eta$. 
As a result, there is no consistent way to obtain genuine 
condensation in this hydrodynamic framework.

\section{Tandem queues with stochastic service rates}\label{sec:stoch-queues}
\subsection{Definition of the process}
As already mentioned in Section~\ref{sec:qmap}, in some cases, on the ring 
geometry,  exclusion processes can be mapped onto a queueing process. These have a fixed number 
of queues, organized in tandem and 
in each queue, both the number of clients and the service rate are stochastic, the service 
rate being not necessarily a function of the number clients. In some case, each queue taken 
in isolation may not be a reversible process,  with some hysteresis feature relevant to traffic.
This family of models is defined more formally as follows:
let ${\cal G} = ({\cal N}, {\cal L})$ a network of queues with dynamical
(stochastic) service rates.  By dynamical service rates, we 
mean that each single queue $i\in {\cal N}$ is represented by a vector
$z_i(t) = (n_i(t),\mu_i(t))\in E_i \subset {\mathbb N}\times
{\mathbb R}^+$.  $n_i(t)$ is the number of clients and $\mu_i(t)$ is a
service rate, which represents the global transition rate from $z_i$
to $z'_i = (n_i-1,\mu_i') \in V_i^-(z_i)$ ($V_i^-(z)$ is the set of
points in $E_i$ having one client less than $z$).
This process is a generalization of queueing processes with 
stochastic service~\cite{Harris,tgf07,SaFuLa}.
From the viewpoint of the statistical physics it is an extension of the
zero-range process, obtained by adding internal dynamics. 

Two sets of transition probability matrices $p_i^\pm(z,z')$ and one
set of transition rates $q_i^0(z,z')$ are introduced to complete the
definition of the process.  When a client get served in queue $i$, there
is a transition from the state of the departure queue $z_i$ to one of the state 
$z'\in V_i^-(z)$ with a rate $\mu_i(t)p_i^-(z,z')$; in the destination queue, a transition 
occurs from the state $z_{i+1}$ into one of state $z'\in V_{i+1}^+(z)$
with probability $p_{i+1}^+(z,z')$. We have the normalizations,
\begin{equation}\label{eq:norma}
\sum_{z'\in V_i^\pm(z)}p_i^\pm(z,z') = 1.
\end{equation}
Additional internal transitions are allowed, where the service rate $\mu_i$
of queue $i$ changes
independently of any arrival or departure. The intensities of these transitions
are given by the set $q_i^0(z,z'),\ z'\in V_i^0(z)$ 
of transition rates
($V_i^0(z)$ is the set of points in $E_i$ having the same number of clients
as $z$).
The combined set of transition rates,
\[
q_i(z,z') \egaldef \lambda p_i^+(z,z')\ind{z'\in V_i^+(z)} + 
\mu(z) p_i^-(z,z')\ind{z'\in V_i^-(z)} +q_i^0(z,z')\ind{z'\in V_i^0(z)},
\]
defines for each $i\in {\cal N}$ a continuous time
Markov process representing the dynamics of one queue taken  in isolation with arrival rate $\lambda$.
The joint process  is then entirely specified by the state graph of the
single queue which set of nodes $\{(n,\mu)\}$ is a subset of ${\mathbb N}\times{\mathbb R}_+$, and 
in which each transition is represented by an oriented edge (see e.g. Figures~\ref{fig:balance}
and \ref{fig:composite}).

\subsection{Product form of clusters at steady-state}\label{sec:prodform}
In the stationary regime, the invariant measure of a network of reversible queues
is known to factorize into a product form in general~\cite{Kel}, which allows to 
compute many equilibrium quantities explicitly.
With a dynamical rate it is interesting to check for such properties and 
in this section  we study the conditions under which the stationary
state has a product form. For simplicity we restrict the analysis 
to the case where $\cal G$ corresponds to a tandem of $N$ queues where the clients 
leave queue $i$ to enter in queue $(i+1)\mod(N)$, for any $i=0,\ldots,N-1$.  
We can establish the following sufficient conditions:
\begin{theo} 
Let $\pi_i^{\lambda}$ denotes the steady state probability
corresponding to queue $i$ taken in isolation and fed with a Poisson
process with rate $\lambda$. If the following partial balance
equations are satisfied (see Figure~\ref{fig:balance}),
\begin{align}
\sum_{z\in V_i^+(z_i)}  \mu(z)p_i^-(z,z_i) \pi_i^{\lambda}(z)\  &=
\lambda \pi_i^{\lambda}(z_i),\label{eq:partbalance1}\\[0.2cm]
\mu(z_i)\pi_i^{\lambda}(z_i) 
\ + \sum_{z\in V_i^0(z_i)} q_i^0(z_i,z)\pi_i^{\lambda}(z_i)
&=\nonumber\\[0.2cm] 
\sum_{z\in V_i^-(z_i)}\lambda p_i^+(z,z_i) \pi_i^{\lambda}(z)
&+ \sum_{z\in V_i^0(z_i)}q_i^0(z,z_i)  \pi_i^{\lambda}(z),\label{eq:partbalance2}
\end{align}
then the joint probability measure of the network has the 
following product form at steady state:
\begin{equation}\label{eq:prodform}
P(S = \{z_i, i\in {\cal N}\}) = \frac{\prod_{i\in{\cal N}} \pi_i^{\lambda}(z_i)}{P(\sum_i n_i = N)}
\end{equation}
\end{theo}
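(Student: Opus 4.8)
The plan is to verify directly that the candidate product measure satisfies the global balance equations for the joint Markov process, using the two families of partial balance equations as the key input. Write $P(S) = Z^{-1}\prod_{i\in\mathcal{N}}\pi_i^\lambda(z_i)$, where $Z = P(\sum_i n_i = N)$ is the normalizing constant enforcing the conservation of the total number of clients on the ring. Since $Z$ is constant on the conservation sheet, it suffices to show that the unnormalized product $\prod_i \pi_i^\lambda(z_i)$ is invariant under the full generator of the tandem; the normalization then follows automatically. The generator of the joint process splits into three types of elementary moves: a departure/arrival pair transferring one client from queue $i$ to queue $i+1$ (rate $\mu(z_i)p_i^-(z_i,\cdot)p_{i+1}^+(z_{i+1},\cdot)$), and the purely internal rate-changing transitions $q_i^0$ within each queue. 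The crucial structural observation is that on the ring the arrival rate feeding queue $i+1$ is exactly the departure rate out of queue $i$, so the isolation parameter $\lambda$ plays the role of a common effective throughput; this is what makes a single $\lambda$ appear in all the $\pi_i^\lambda$.

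First I would set up the global balance at a fixed joint configuration $S = \{z_i\}$: the total outflow $\sum_{S'} q(S,S')P(S)$ must equal the total inflow $\sum_{S'} q(S',S)P(S')$. Because the product measure factorizes, each term in these sums will contain the product $\prod_{j\ne i,i+1}\pi_j^\lambda(z_j)$ as a common factor for coupled moves, or $\prod_{j\ne i}\pi_j^\lambda(z_j)$ for internal moves, which I can cancel. The strategy is then to group the balance contributions queue by queue and show that the partial balance equations (\ref{eq:partbalance1}) and (\ref{eq:partbalance2}) are precisely what is needed to close each group. Equation (\ref{eq:partbalance1}) is the statement that, in isolation, the total rate of clients arriving into state $z_i$ by a departure event equals $\lambda\pi_i^\lambda(z_i)$; read on the ring, this lets me substitute the genuine coupled departure-flux from the upstream queue by an effective $\lambda$-term, so that the coupled inflow into queue $i$ decouples into a product of single-queue quantities. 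Equation (\ref{eq:partbalance2}) is the full isolation balance at $z_i$ combining service-out, internal-out, arrival-in and internal-in; it is what guarantees that, once the coupling has been rewritten in terms of $\lambda$, the remaining per-queue terms cancel.

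The main obstacle I anticipate is bookkeeping the coupled moves correctly: a single tandem transition changes the states of two adjacent queues simultaneously (queue $i$ loses a client, queue $i+1$ gains one), so its inflow and outflow contributions straddle two factors of the product and cannot be assigned to a single queue. The trick will be to telescope around the ring — the departure flux that equation (\ref{eq:partbalance1}) converts into $\lambda$ for queue $i+1$ is exactly the service outflow that equation (\ref{eq:partbalance2}) accounts for in queue $i$ — so that the $\lambda$-terms introduced artificially by (\ref{eq:partbalance1}) at each site are matched and absorbed by the service terms of the neighbouring site. I would carry this out by summing the balance defect over all sites and showing the cross terms cancel cyclically, which is where the periodic (ring) geometry is essential; the argument would fail for open boundaries without boundary-injection terms. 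Once the coupled terms telescope away and the internal terms cancel by (\ref{eq:partbalance2}), global balance holds, establishing that (\ref{eq:prodform}) is the stationary measure.
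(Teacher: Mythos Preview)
Your proposal is correct and follows essentially the same route as the paper: write the global balance for the product measure, cancel the common factors $\prod_{j\neq i,i+1}\pi_j^\lambda(z_j)$, then regroup the coupled $i\to i+1$ terms cyclically so that (\ref{eq:partbalance1}) applied at queue $i$ replaces the departure flux by $\lambda\pi_i^\lambda(z_i)$, which pairs with (\ref{eq:partbalance2}) at queue $i+1$. The paper implements your ``telescoping'' by explicitly re-indexing (keeping index $i$ on the coupled inflow term and shifting to $i+1$ on the others), then multiplies through by $\lambda$ to exhibit the resulting per-pair sufficient condition as a literal product of the left- and right-hand sides of (\ref{eq:partbalance1}) and (\ref{eq:partbalance2}); this last algebraic step is worth making explicit in your write-up, but the idea is the one you describe.
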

\begin{proof}
Using (\ref{eq:norma}), the global balance equations reads
\begin{align*}
\sum_{i\in{\cal N}}\Biggl[ \sum_{\substack{z'\in V_i^+(z_i)\\ z''\in V_{i+1}^-(z_{i+1})}}
\mu(z')p_i^-(z',z_i)&p_{i+1}^+(z'',z_{i+1})
\frac{\pi_i^\lambda(z')\pi_{i+1}^\lambda(z'')}{\pi_i^\lambda(z_i)\pi_{i+1}^\lambda(z_{i+1})}\\[0.2cm]
&+\sum_{z\in V_i^0(z_i)} q_i^0(z,z_i)\frac{\pi_i^\lambda(z)}{\pi_i^\lambda(z_i)}\Biggr]P(S= \{z_i\}) \\[0.2cm]
&=\ \sum_{i\in{\cal N}} 
\Bigl[\mu(z_i) + \sum_{z\in V_i^0(z_i)} q_i^0(z_i,z)\Bigr] P(S= \{z_i\})
\end{align*}
To find a sufficient condition select term $i$
in the first term of the left hand side and terms $i+1$ otherwise in
the preceding equation, yielding $\forall i\in {\cal N}$,
\begin{align*}
\sum_{\substack{z'\in V_i^+(z_i)\\z''\in V_{i+1}^-(z_{i+1})}}
\mu(z') p_i^-(z',z_i)p_{i+1}^+(z'',z_{i+1}) &\pi_i^{\lambda}(z') \pi_{i+1}^{\lambda}(z'') 
+\\
&\sum_{z\in V_{i+1}^0(z_{i+1})}
q_{i+1}^0(z,z_{i+1})\pi_i^{\lambda}(z_i) \pi_{i+1}^{\lambda}(z)\\[0.2cm]
=
\mu(z_{i+1}) \pi_i^{\lambda}(z_i) \pi_{i+1}^{\lambda}(z_{i+1})
&+\sum_{z\in V_{i+1}^0(z_{i+1})}
q_{i+1}^0(z_{i+1},z)\pi_i^{\lambda}(z_i) \pi_{i+1}^{\lambda}(z_{i+1}).
\end{align*}
After multiplying this last equation by $\lambda$, this sufficient 
condition rewrites
\begin{align*}
\Bigl(\sum_{z'\in V_i^+(z_i)}
\mu(z') p_i^-(z',z_i)\pi_i^{\lambda}(z')\Bigr)
&\Bigl(\lambda\sum_{z''\in V_{i+1}^-(z_{i+1})}
p_{i+1}^+(z'',z_{i+1})  \pi_{i+1}^{\lambda}(z'')\Bigr) \\[0.2cm]
&= \lambda\pi_i^{\lambda}(z_i)\times\\
\Bigl(
\mu(z_{i+1}) \pi_{i+1}^{\lambda}(z_{i+1})
+\sum_{z\in V_{i+1}^0(z_{i+1})}
&q_{i+1}^0(z_{i+1},z) \pi_{i+1}^{\lambda}(z_{i+1})
-q_{i+1}^0(z,z_{i+1})\pi_{i+1}^{\lambda}(z)\Bigr),
\end{align*}
which is satisfied $\forall i\in {\cal N}$ under the two sets of
sufficient conditions (\ref{eq:partbalance1}) and
(\ref{eq:partbalance2}).
\end{proof}

Note that reversible processes are special cases of processes obeying
(\ref{eq:partbalance1},\ref{eq:partbalance2}) and, in this respect,
our result is an adaptation to our context of the general results of
Kelly concerning product forms in queueing
networks~\cite{Kel}.  The next obvious question is then whether there
exists any such process which is non-reversible.

\subsection{Examples}
\begin{figure}
\begin{center}
\resizebox*{0.8\textwidth}{!}{\input{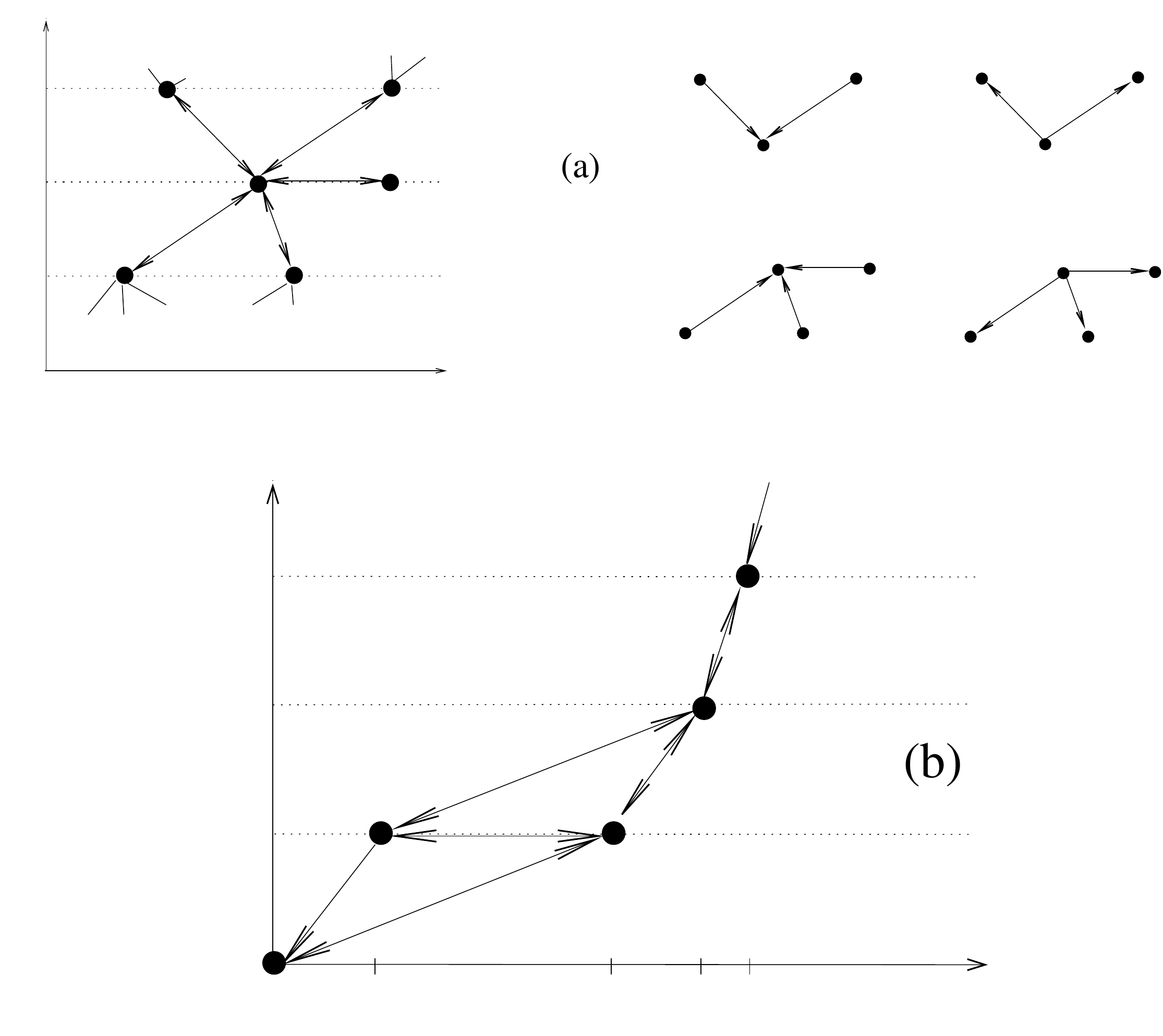_t}}
\end{center}
\caption{(a) Partial balance condition corresponding to
  (\ref{eq:partbalance1})
  and (\ref{eq:partbalance2}). (b) Example
  of a non-reversible single queue process obeying  to this condition.}\label{fig:balance}
\end{figure}
Let us give two examples of non-reversible queues yielding a product form.
The isolated-queue state-graph corresponding to the first one is sketched in Figure~\ref{fig:balance}(b).
Using the labelling $(0)= (0,\mu_0)$, $(1)= (1,\mu_1)$, $(2)= (1,\mu_2)$, $(3)= (2,\mu_4)$ and
$(i)=(n-2,\mu_i)$ for $i\ge4$ for the different states of the isolated queue,
these partial  balance conditions written for the nodes $(i), i<4$ yield four 
independent relations out of six, between the single 
queues steady states probabilities $\pi_i$:
\begin{align}
\lambda \pi_0 = \mu_1\pi_1+ \mu_2\pi_2, &\qquad \lambda \pi_1 = \mu_3p_{31}\pi_3,\nonumber\\[0.2cm]
q_{21}\pi_2 = (q_{12}+\mu_1)\pi_1,&\qquad \lambda \pi_2 = \mu_3p_{32}\pi_3,\nonumber
\end{align}
which are actually compatible iff:
\[
\frac{p_{32}}{p_{31}} = \frac{q_{12}+\mu_1}{q_{21}}.
\]
Letting $p\egaldef p_{31} =1-p_{32}$, this corresponds to 
\[
p = \frac{q_{21}}{\mu_1+q_{12}+q_{21}}.
\]
For $i\ge 4$, detail balance holds anyway.
When this hold the joint measure of the tandem queue takes the product form~(\ref{eq:prodform})
whereas this example is by construction  never reversible
(transitions $0\to 1$ is absent), with the single queue invariant measure  
simply reading
\begin{align*}
\pi_1  &= \frac{\lambda q_{12}}{\mu_1\mu_2 + q_{12}\mu_2 + q_{21}\mu_1}\pi_0,\qquad
\pi_2  = \frac{\lambda(\mu_1+q_{21})}{\mu_1\mu_2 + q_{12}\mu_2 + q_{21}\mu_1}\pi_0,\\[0.2cm]
\pi_3  &= \frac{\lambda^2(\mu_1+ q_{12}+q_{21})}{\mu_3(\mu_1\mu_2 + q_{12}\mu_2 + q_{21}\mu_1)}\pi_0,\qquad
\pi_{n\ge 4}  = \biggl[\prod_{i=4}^n\frac{\lambda}{\mu_i}\biggr]\frac{\lambda
q_{12}}{\mu_1\mu_2 + q_{12}\mu_2 + q_{21}\mu_1}\pi_0,
\end{align*}
where $\pi_0$ is set upon normalization. 
\begin{figure}
\begin{center}
\resizebox*{0.45\textwidth}{!}{\input{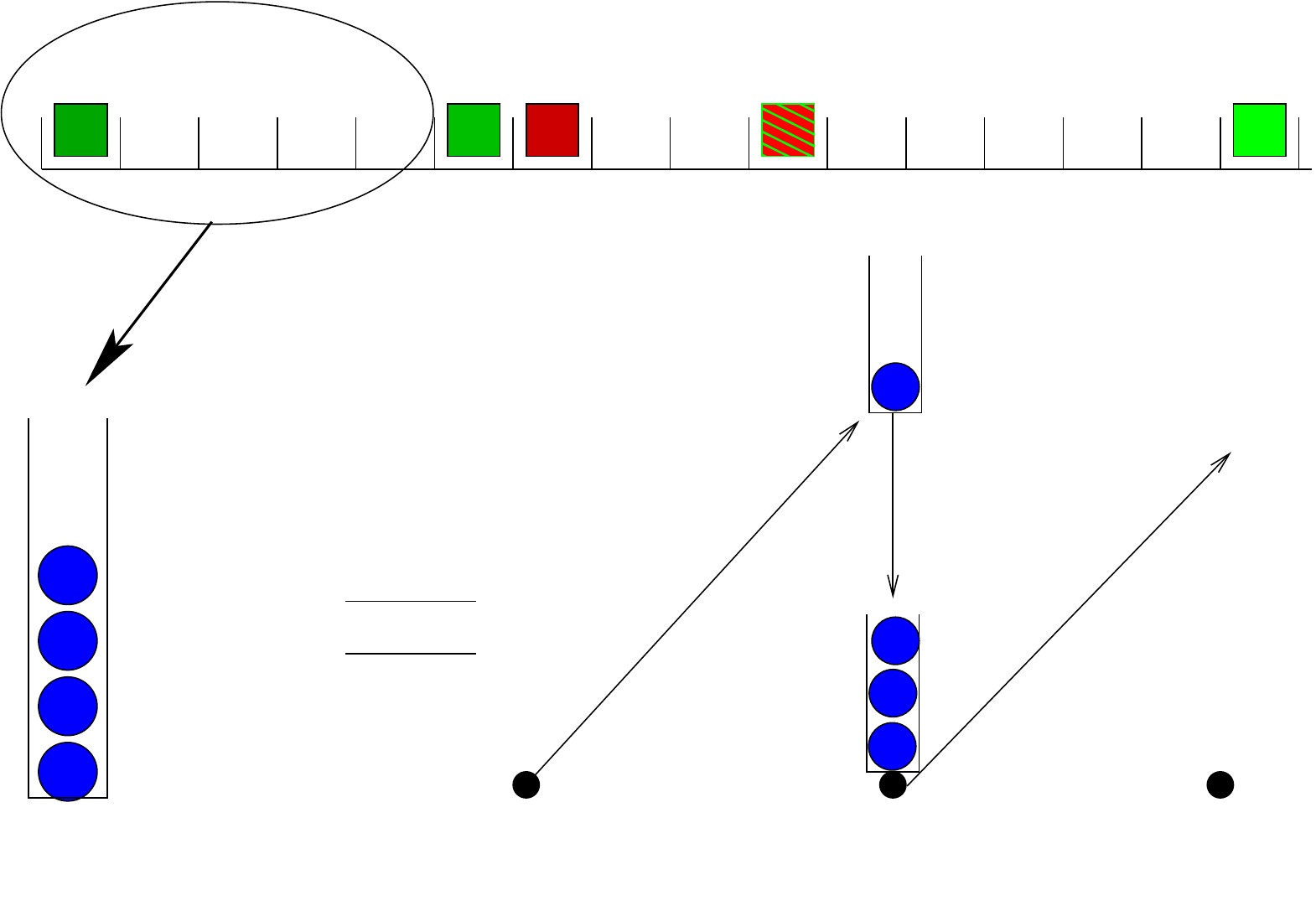_t}}
\hspace{0.5cm}\resizebox*{0.4\textwidth}{!}{\input{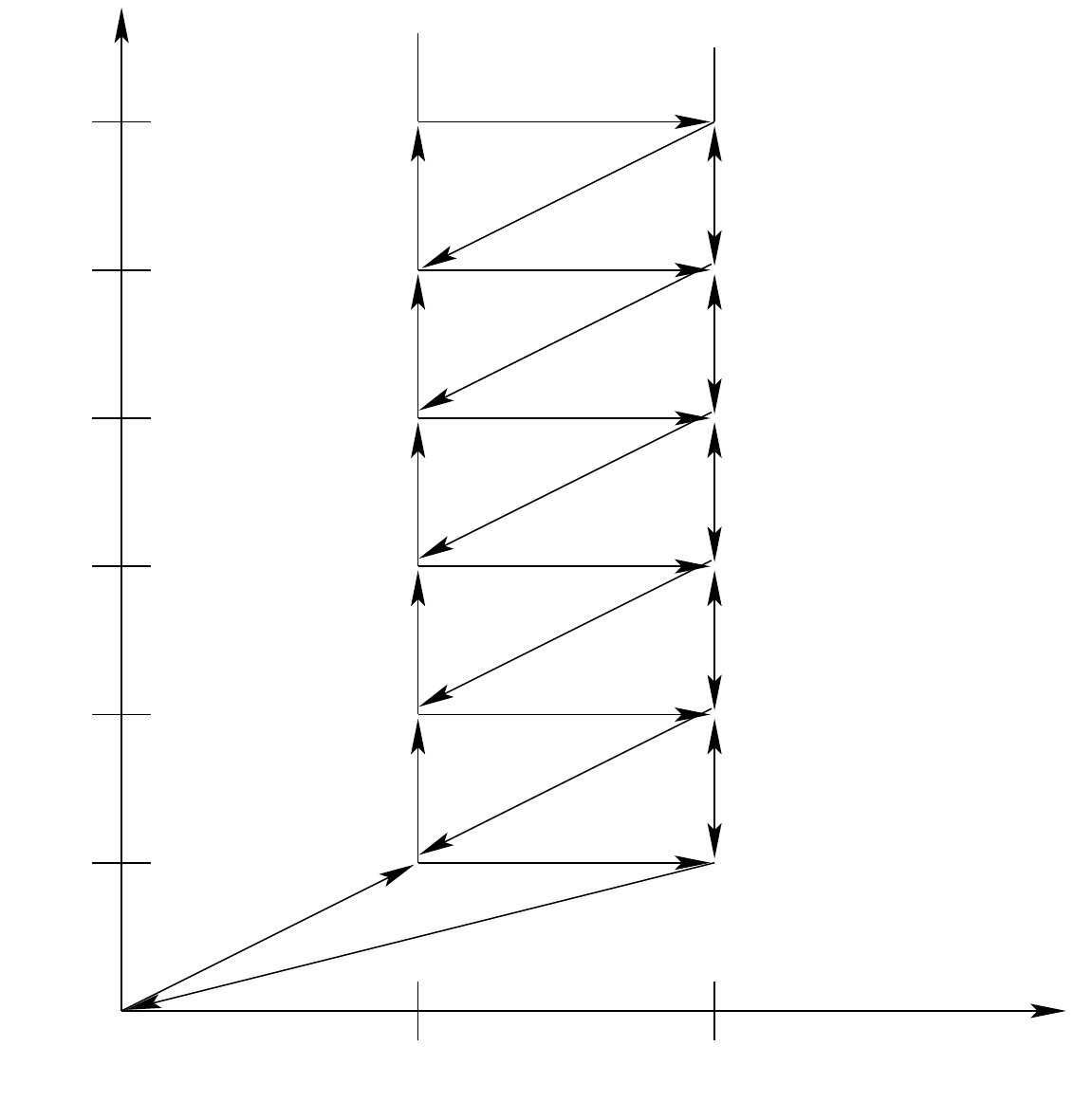_t}}
\end{center}
\caption{\label{fig:composite}Composite traffic tandem queue process. 
Equivalence with exclusion process (left). Corresponding process in isolation,
$p=\rho_1^{n-1}\frac{\rho_1-\rho_2}{\rho_1^n-\rho_2^n}$ (right)}
\end{figure}

Our second example is depicted in Figure~\ref{fig:composite}. It
represents an exclusion process on a ring, where particles, to hop to the next site 
have to follow a cascading process. For sake of simplicity let us limit ourselves 
to a two stage cascade, where empty site are of two different types:
either blocking type $B$ and non-blocking type $A$.
The transition rules then read:
\begin{align}
VA &\lra^{\mu_1} BV \qquad \text{vehicle hopping } \\[0.1cm]  
VB &\lra^{\mu_2} VA \qquad \text{blocking site becomes non-blocking} \\[0.1cm]
AB &\lra^{\mu_2} AA  \qquad \text{blocking site becomes non-blocking}
\end{align}
This model is mapped on a composite queueing process as follows:
cars represent composite queues, with client arriving in the type $B$ 
(first stage of the composite queue)
and able to jump to the second stage in type $A$ when they reach the server of the queue, 
i.e. when the 
left neighbor is of type $A$. 
The whole tandem of queues has a  product form since each single queue taken in isolation 
is reversible.
Considering then the  queue depicted on Figure~\ref{fig:composite}(b) as an effective equivalent to 
the composite one, it is clearly non-reversible. Nevertheless
the partial balance sufficient condition for having a product
form are fulfilled. Indeed in the example of the figure the isolated
effective queue has for invariant measure
\begin{align*}
\pi(n,\tilde\mu = \frac{\mu_1\mu_2}{\mu_1+\mu_2}) &= (1-\rho_1)(1-\rho_2)\rho_1^n,\\[0.2cm]  
\pi(n,\tilde\mu = \mu_1) &= (1-\rho_1)(1-\rho_2)\frac{\rho_1^n-\rho_2^n}{\rho_1-\rho_2},  
\end{align*}
with $\rho_{1,2} = \lambda/\mu_{1,2}$. Computing the fluxes yields
then the desired partial balance conditions. It seems that this procedure of replacing 
composite of reversible  queues with effective one yield automatically non-reversible 
queues fulfilling 
the partial balance conditions.

The analysis of the fundamental diagram follows then almost straightforwardly when such
product form is present, although some non-trivial feature are also
expected depending on the value of the spectral gap between the
steady-state and the first excited state.

\section{Fundamental diagram for product form distributions}\label{sec:fd}
\subsection{Large deviation formulation}
In practice, points plotted in experimental \textsc{fd} studies are
obtained by averaging data from static loop detectors over a few
minutes (see e.g.~\cite{Kerner}).  This is difficult to compute from our
queue-based model, for which a space average is much easier to obtain.
The equivalence between time and space averaging is not an obvious
assumption~\cite{Blank}, but since jams are moving, space and time correlations are
combined in some way~\cite{NaPa} and we consider this assumption to be
quite safe. In what follows, we will therefore compute the \textsc{fd}
by considering the conditional probability
measure $P(\phi|d)$ for a closed system, where $d$ represents the spatial density of
cars and $\phi$ the normalized flux:
\[
\begin{cases}
\DD d = \frac{N}{N+L},\\[0.2cm]
\DD \phi = \frac{\Phi}{N+L},
\end{cases}
\ \text{with}\ 
\begin{cases}
L& \text{number of queues}\\
\DD N = \sum_{i=1}^L n_i&\text{number of vehicles}\\[0.2cm]
\DD \Phi = \sum_{i=1}^L \mu_i\ind{n_i>0}&\text{integrated flow}
\end{cases}
\]
The numbers $N$ of vehicles and $L$ of queues are fixed, meaning 
in the statistical physics parlance, 
that we are working with the canonical
ensemble. If we assume that we are in the conditions of having a product form 
presented in the preceding section,
for the stationary distribution with individual probabilities $\pi^\lambda(n,\mu)$ associated
to each queue taken in isolation then, taking into account  
the  constraints yields the following form of the
joint probability measure:
\[
P(\{n_i,\mu_i\}) = \frac{\delta\bigl(\textstyle N-\sum_{i=1}^L n_i\bigr)}{Z_L[N]} 
\prod_{i=1}^L \pi^\lambda (n_i,\mu_i),
\]
with the canonical partition function
\[
Z_L[N] \egaldef {\sum_{\{n_i,\mu_i\}} 
\delta\bigl(\textstyle N-\sum_{i=1}^L n_i\bigr)}\prod_{i=1}^L \pi^\lambda (n_i,\mu_i),
\]
where $\delta$ denotes now the usual Dirac function. When $\phi$ is interpreted as 
a continuous variable, the properly 
normalized density-flow conditional probability distribution takes the form
\begin{equation}\label{eq:fd}
P(\phi|d) =  \frac{L}{1-d}\frac{Z_L\bigl[L\frac{d}{1-d},L\frac{\phi}{1-d}\bigr]}
{Z_L\bigl[L\frac{d}{1-d}\bigr]},
\end{equation}
with
\begin{equation}\label{def:ZLNPhi}
Z_L[N,\Phi] \egaldef {\sum_{\{n_i,\mu_i\}}  
\textstyle \delta\bigl(N-\sum_{i=1}^L n_i\bigr)
\delta\bigl(\Phi-\sum_{i=1}^L \mu_i\ind{n_i>0}\bigr)}\prod_{i=1}^L \pi^\lambda (n_i,\mu_i),
\end{equation}
Note (by simple inspection, see e.g. \cite{Kel})
that $P(\phi|d)$ is independent of $\lambda$. 
$Z_L[N]$ and $Z_L[N,\Phi]$ represent respectively 
the probability of having $N$ vehicles and the joint probability for having at the 
same time $N$ vehicles and a flux $\Phi$, under the unconstrained product form. 
Under this product form, on general ground, we expect $d$ and 
$\phi$ to satisfy a large deviation principle (see~\cite{Touch} 
for a recent review connecting large deviations to statistical physics), 
i.e. that there exist two rate functions $I(d)$ and $J(d,\phi)$ such that for large $L$
\begin{align*} 
Z_L(N) \asymp e^{-LI(d)},\\[0.2cm]
Z_L[N,\Phi] \asymp e^{-LJ(d,\phi)},
\end{align*}
such that we expect a the large deviation version of the fundamental diagram of the form
\begin{equation}\label{eq:ldfd}
P(\phi|d) \asymp e^{-LK(\phi\vert d)}.
\end{equation}
with 
\[
K(\phi\vert d)\egaldef J(d,\phi)-I(d).
\]
When there is one single constraint like for $Z_L(N)$, the large deviation 
expression can be obtained by saddle point techniques like in \cite{EvMaZi,FaLa}.
For more than one constraint it seems easier to work variationally.  
Consider $L_{n,\mu}$ the number of queues having $n$ clients and service rate $\mu$.
Assuming $\mu$ runs over discrete values, the partition function (\ref{def:ZLNPhi}) can be recast in:
\begin{align}
Z_L(N,\Phi) &= \sum_{\{L_{n,\mu}\}}\frac{L!}{\prod_{n,\mu}L_{n,\mu}!}\prod_{n,\mu} 
\pi^\lambda(n,\mu)^{L_{n,\mu}}\nonumber\\[0.2cm]
&\times
\delta\bigl(L-\sum_{n,\mu}L_{n,\mu}\bigr)
\delta\bigl(N-\sum_{n,\mu}n L_{n,\mu}\bigr)
\delta\bigl(\Phi-\sum_{n>0,\mu}\mu L_{n,\mu}\bigr)\label{eq:znl}
\end{align}

As can be seen explicitly in the following, 
when there is no condensation, i.e. no heavy tail in the $L(n,\mu)$ distribution, 
we are in the condition of the  Sanov theorem~\cite{Touch} to get an asymptotic expression of 
the partition function.
Namely, 
\begin{equation}\label{eq:ldzln}
Z_L(N,\Phi) \asymp \sum_{\{y(n,\mu)\}}\exp\Bigl(-L{\cal F}[y]\Bigr)
\delta\bigl(C_1[y]\bigr)\delta\bigl(C_2[y]\bigr)
\delta\bigl(C_3[y]\bigr),
\end{equation}
with 
\[
y(n,\mu) \egaldef \frac{L_{n,\mu}}{L}
\]
the fraction of queues having $n$ clients and service rate $\mu$,
${\cal F}$ the large deviation functional and $C_{1,2,3}[y]$ the constraints:
\begin{align*}
C_1[y] &= \sum_{n,\mu}y(n,\mu)-1,\\[0.2cm]
C_2[y] &= \sum_{n,\mu} n y(n,\mu)-\frac{d}{1-d},\\[0.2cm]
C_3[y] &= \sum_{n>0,\mu} \mu y(n,\mu)-\frac{\phi}{1-d}.
\end{align*}
This is obtained explicitly by 
making use of the Stirling formula in~(\ref{eq:znl}), keeping only the leading terms to get:
\begin{equation}\label{eq:freen}
{\cal F}(y) = \sum_{n,\mu}y(n,\mu)\log\frac{y(n,\mu)}{\pi^\lambda(n,\mu)}.
\end{equation}
The rate functions are then obtained using the contraction principle~\cite{Touch}: 
\begin{align*}
I(d) &= \inf_{\{y:C_1[y]=0,C_2[y]=0\}} {\cal F}[y]\\[0.2cm]
J(d,\phi) &= \inf_{\{y:C_1[y]=0,C_2[y]=0,C_3[y]=0\}} {\cal F}[y]
\end{align*}
which apply here simply because $\cal F$ is convex and the constraints are linear.
The variational principle consists in to look for a distribution $y$
minimizing the mutual information with the product form while satisfying the constraints;
in this sense it is equivalent to the standard mean field approximation. 
A simple generalization would then 
amount to consider, in case where the product form is not valid, the 2-server problem to 
build a joint-law based on pairs $\pi^\lambda(n_1,\mu_1;n_2,\mu_2)$ 
of nearest neighbor dependencies and to use the Bethe approximation.

Let us introduce the moment and cumulant generating function $g$ and $h$  
associated to $\pi^\lambda$, 
\[
g(s,t) \egaldef \sum_{n=0,\mu}^\infty \pi^\lambda(n,\mu)e^{sn+t\mu} 
\qquad\text{and}\qquad h(s,t) \egaldef \log[g(s,t)],
\]
where it is assumed by convention that the rate $\mu$ is zero in absence of client.
The stationary point then reads
\begin{equation}\label{eq:varsol}
y(n,\mu) = \frac{\pi^\lambda(n,\mu)e^{n\lambda_n + \mu\lambda_\mu }}{g(\lambda_n,\lambda_\mu)},
\end{equation}
with the Lagrange multipliers  $\lambda_n$ and $\lambda_\mu$ associated to constraints $C_2$
and $C_3$, are implicitly given by 
\begin{equation}\label{eq:constraints2}
\begin{cases}
\DD  \frac{\partial h}{\partial s}\bigl(\lambda_n(d,\phi),\lambda_\mu(d,\phi)\bigr)= \frac{d}{1-d},\\[0.3cm]
\DD  \frac{\partial h}{\partial t}\bigl(\lambda_n(d,\phi),\lambda_\mu(d,\phi)\bigr) = \frac{\phi}{1-d}.
\end{cases}
\end{equation}
Inserting the variational solution (\ref{eq:varsol}) into $\cal F$ leads to the resulting expression 
of the rate function as the Legendre transform of $h$
\[
J(d,\phi) = \frac{d}{1-d}\lambda_n(d,\phi) + \frac{\phi}{1-d}\lambda_\mu(d,\phi)
-h\bigl(\lambda_n(d,\phi),\lambda_\mu(d,\phi)\bigr) 
\]
which could actually be obtained directly from  Cramer's theorem because $\{(n_i,\mu_i)\}$
are iid~\cite{Touch}. Concerning the other rate  $I(d)$ function, if $\lambda_n'(d)$ is associated to the first 
constraint, while $\lambda_\mu$ is set to zero we get 
\[
I(d) = \frac{d}{1-d}\lambda_n'(d) - h\bigl(\lambda_n'(d),0\bigr) 
\]
in term of the cumulant generating function $h$, which completes the determination of 
the large deviation FD (\ref{eq:ldfd}).

\subsection{Gaussian fluctuations around the fundamental diagram}\label{sec:gfluct}
The deterministic part of the fundamental diagram is then obtained for $\phi(d)$ such that, 
\begin{equation}\label{eq:djphi}
\frac{\partial J(d,\phi)}{\partial \phi}\Big|_{d,\phi(d)} = \frac{\lambda_\mu(d,\phi(d))}{1-d} = 0.
\end{equation}
When considering small fluctuations rather than large deviations, we recover a central 
limit theorem version of the fundamental diagram. In that case, replacing in (\ref{eq:ldzln})
$\cal F$ with its quadratic approximation around some stationary point value 
$y^*$ of $y$ yields the following expression for the partitions functions:
\begin{align}
Z_L(N) &\simeq 
\frac{\exp\bigl(-LI(d)\bigr)\bigr)}
{\sqrt{2\pi L h_{ss}({\lambda'}_n,0)}}\label{eq:ZLN}\\[0.2cm] 
Z_L(N,\Phi) &\simeq \frac{\exp\bigr(-LJ(d,\phi)\bigl)}
{\sqrt{(2\pi L)^2 \det\bigl(H^\star(\lambda_n,\lambda_\mu)\bigr)}}\label{eq:ZLNPhi} 
\end{align}
with $(\lambda_n,\lambda_\mu)$ and ${\lambda'}_n$ solving the constraints (\ref{eq:constraints2}) 
for a given value of $d$ and $\phi$ and denoting
\begin{equation}\label{def:Hdual}
H^\star(s,t) \egaldef 
\left[\begin{matrix}
h_{ss} & h_{st}\\
h_{ts} & h_{tt}
\end{matrix}\right],
\end{equation}
with the use of shorthand notations for the derivatives, 
to represent the covariant matrix between the charges of the queues and the flux.
This matrix is actually the Hessian in the dual representation, 
associated to the Legendre transform of the free energy (see Appendix~\ref{sec:appA} for details).
We obtain for the  large deviation FD the expression
\[
P(\phi|d) = \sqrt{\frac{L}{2\pi (1-d)^2} 
\frac{h_{ss}(\lambda_n',0)}{\det\bigl(H^\star(\lambda_n,\lambda_\mu)\bigr)}}
\exp\bigr(-LK(\phi\vert d)).
\]
The Taylor expansion at second order in $\phi-\phi(d)$ at fixed density is performed, 
using (\ref{eq:djphi}) so that  
\[
J\bigl(d,\phi(d)\bigr) = I(d).
\]
and 
\[
\frac{\partial^2 J(d,\phi)}{\partial \phi^2}\vert_{d,\phi(d)} =
\frac{1}{1-d}\frac{\partial \lambda_\mu(d,\phi)}{\partial \phi}\vert_{d,\phi(d)} =
\frac{1}{(1-d)^2}\frac{h_{ss}(\lambda_n,\lambda_\mu)}{\det(H^\star(\lambda_n,\lambda_\mu))}. 
\]
The last equality is obtained by remarking that from  the constraints (\ref{eq:constraints2})
we have 
\begin{align*}
\frac{\partial \lambda_n}{\partial\phi}h_{ss}+ 
\frac{\partial \lambda_\mu}{\partial\phi}h_{st} &= 0\\[0.2cm]
\frac{\partial \lambda_n}{\partial\phi}h_{st}+ \frac{\partial \lambda_\mu}{\partial\phi}h_{tt}
&= \frac{1}{1-d},
\end{align*}
from which $\partial \lambda_\mu/\partial\phi$ can be eliminated to yield the result.
Therefore, 
\[
K(\phi\vert d) = \frac{\bigl(\phi-\phi(d)\bigr)^2}{2(1-d)^2}\frac{h_{ss}}{\det(H^\star(\lambda_n,\lambda_\mu)}
+o\bigl(\phi-\phi(d)\bigr)^2.
\]
giving the Gaussian fluctuations of the FD.
The variance as a function of the density then reads
\begin{equation}\label{eq:FDvar}
\Var(\phi|d) = \frac{(1-d)^2}{L} \frac{\det(H^\star(\lambda_n(d),0)}{h_{ss}(\lambda_n(d),0)} = 
\frac{(1-d)^2}{L} 
\bigl({H^{\star-1}}_{tt}\bigr)^{-1}.
\end{equation}

\subsection{Special case: the TASEP}

Let us apply this formula first to the simple M/M/1 file corresponding
to one single speed level ($\mu_a=\mu_b=\mu$), i.e.\ 
a TASEP process. The rate of arrival $\lambda$ is set by convenience to $\lambda=d \mu$. 
The cumulant-generating function then reads
\[
h(s,t) = \log(1-d) + \log\frac{1+d e^{-s}(e^{-tv}-1)}{1-d e^{-s}}.
\]
Therefore we have 
\[
\det(H^\star(0,0)) = \left|
\begin{matrix}
\DD \frac{d}{(1-d)^2} & d \mu \\[0.2cm]
\DD d \mu & d(1-d)\mu^2
\end{matrix}\right|
=  \frac{d^3}{1-d}\mu^2,
\]
while $\E[\phi|d] = \mu d(1-d)$. Equation (\ref{eq:FDvar}) yields
\begin{equation}\label{eq:varvar}
\Var_{\text{var}}(\phi| d) = \frac{1-d}{L}d^2(1-d)^2 \mu^2.
\end{equation}
The direct result, in the grand canonical ensemble, with  expectation 
constraint on the density is given by
\[
\Var_{\text{GC}}(\phi|d) = \frac{1-d}{L}h_{tt} = \frac{1-d}{L}d(1-d). 
\]
Actually, the TASEP computation on the ring geometry can be done
directly as
\begin{align*}
\frac{1}{\mu^2}\text{Var}(\phi|d) &= \text{Var}\Bigl(\frac{1}{N+L} 
\sum_{i=1}^{N+L}\tau_i\bar\tau_{i+1}\Bigr)\\[0.2cm]
&=\frac{(L+N)(L+N-3)}{(L+N)^2}\E(\tau_i\bar\tau_{i+1}\tau_j\bar\tau_{j+1})\\[0.2cm]
&+\frac{1}{L+N}\E(\tau_i\bar\tau_{i+1})-\frac{1}{(L+N)^2}\Bigl(\E(\tau_i\bar\tau_{i+1})\Bigr)^2,
\end{align*}
with $\bar\tau_i =1-\tau_i \in\{0,1\}$, $\tau_{N+L+1} = \tau_1$ and
$\forall(i,j)$ s.t. $\vert i-j\vert\ge 2$. The expressions above are
computed as to give
\[
\E(\tau_i\bar\tau_{i+1}) = \frac{NL}{(L+N)(L+N-1)} = d(1-d)(1+\frac{1}{L+N}) 
+o\bigl(\frac{1}{L+N}\bigr)
\]
and 
\begin{align*}
\E(\tau_i\bar\tau_{i+1}\tau_j\bar\tau_{j+1}) &= 
\frac{N(N-1)L(L-1)}{(L+N)(L+N-1)(L+N-2)(L+N-3)}\\[0.2cm] 
&= d^2(1-d)^2(1+\frac{6}{L+N})-\frac{1}{L+N}d(1-d)+o\bigl(\frac{1}{L+N}\bigr),
\end{align*}
which finally yields (\ref{eq:varvar}).

\section{Queues with two-state service rates}\label{sec:single-jam}
Among the possible mappings discussed in Section~\ref{sec:qmap},
taking empty sites as queues should give more information on the jam structure, i.e.\ presumably
on the long range correlations of the model
associated to cluster formations, than the mapping of type (i). 
Unfortunately, the mapping of type (ii) is not exact in this case. Nevertheless, let us 
try to define a queueing process able to capture these correlations at least qualitatively.

The first step that we take in this direction is to understand 
the dynamics and steady state regime of a single cluster 
of vehicles of size $n_t$, assuming that vehicles of type $A$ [resp.\ $B$]  
join the queue with rate $\lambda_a$
[resp.\ $\lambda_b$], while they leaves the queue with rate $\mu_a$  [resp.\ $\mu_b$].
Let $\lambda\egaldef \lambda_a+\lambda_b$ represent the intensity of the incoming process.
In the bulk of the queue, accordingly to the transitions rules defined in 
Section~\ref{sec:model}, only deceleration is possible, with rate $\delta$.

\subsection{Speed profile inside a single cluster}
To study the speed profile, i.e. the speed labels of particle depending on their position inside
a cluster, let us encode a given cluster configuration as a binary sequence $\{(A_0,\ldots,A_{n-1}\}$, 
with $A_i = 1-B_i, \in\{0,1\}$, corresponding to either a fast or a slow vehicle. 
$n$ represents the size of the queue, the front vehicle has index $n-1$ 
while index $0$ corresponds to the last entered one. Let us remark first that 
the front end of the queue is a moving interface, which next $n$ moves are 
completely conditioned by the present  state of the sequence. 
Let us forget about this interface for the 
moment and consider the distribution of the infinite sequence  $\{(A_1,A_2,A_3,\ldots\}$. The joint 
law has not a product form, because the whole sequence is shifted 
by one unit ($A_i\lra A_{i+1}$)  at instants   
of arrival, which causes dependencies between neighbors. Nevertheless, the structure of dependency is 
quite simple, for example single site marginals are determined in close form. 
Let $p_i(t) = P_t(A_i=1)$, 
we have
\begin{align*}
\frac{d p_i}{dt} &= \lambda(p_{i-1}-p_i) -\delta p_i,\qquad\forall i> 1 \\[0.2cm]
\frac{d p_1}{dt} &= \lambda_a(1-p_1)-(\lambda_b+\delta)p_1,
\end{align*}
with $\lambda\egaldef \lambda_a+\lambda_b$.
The generating function
\[
\phi_t(z) \egaldef \sum_{i=0}^\infty p_{i+1}(t)z^i,
\]
satisfies
\[
\frac{d\phi_t(z)}{dt} = \bigl(\lambda z-(\lambda+\delta)\bigr)\phi_t(z)+\lambda_a,
\]
with solution 
\[
\phi_t(z) = \frac{\lambda_a+\phi_0(z)e^{-(\lambda+\delta-\lambda z)t}}{\lambda+\delta-\lambda z}.
\]
Correlations between nearest neighbors,  
$q_i(t)\egaldef P_t(A_i=1,A_{i+1}=1)$ satisfy  a similar closed form equation:
\begin{align*}
\frac{d q_i}{dt} &= \lambda(q_{i-1}-q_i) -2\delta q_i,\qquad\forall i\ge 1 \\[0.2cm]
\frac{d q_1}{dt} &= \lambda_ap_1(t)-(\lambda+2\delta)q_1,
\end{align*}
with 
\[
p_1(t) = \frac{\lambda_a+p_1(0)e^{-(\lambda+\delta)t}}{\lambda+\delta},
\]
which generating function 
\[
\psi_t(z) \egaldef \sum_{i=0}^\infty q_i(t)z^i,
\]
satisfies
\[
\frac{d\psi_t(z)}{dt} = \bigl(\lambda z-(\lambda+2\delta)\bigr)\psi_t(z)+\lambda_ap_1(t),
\]
having for solution:
\begin{align*}
\psi_t(z) &= \frac{\lambda_a^2}{(\lambda+\delta)(\lambda+2\delta-\lambda z)}\\[0.2cm]
&+ \bigl(\psi_0(z)-\frac{\lambda_a^2}{(\lambda+\delta)(\lambda+2\delta-\lambda z)}+
\frac{\lambda^2}{(\lambda+\delta)\lambda z}p_1(0)(1-e^{-\lambda zt})\bigr)e^{-(\lambda+2\delta-\lambda z)t}.
\end{align*}
At steady state we see that 
\[
p_i^\infty = \frac{\lambda_a}{\lambda}
\Bigl(\frac{\lambda}{\lambda+\delta}\Bigr)^i\qquad\text{and}\qquad  
q_i^\infty = \frac{\lambda_a^2}{\lambda(\lambda+\delta)}
\Bigl(\frac{\lambda}{\lambda+2\delta}\Bigr)^i,
\]
indicating a significant level of correlations when $\delta/\lambda$ is $O(1)$.
\begin{figure}
\begin{center}
\resizebox*{0.8\textwidth}{!}{\input{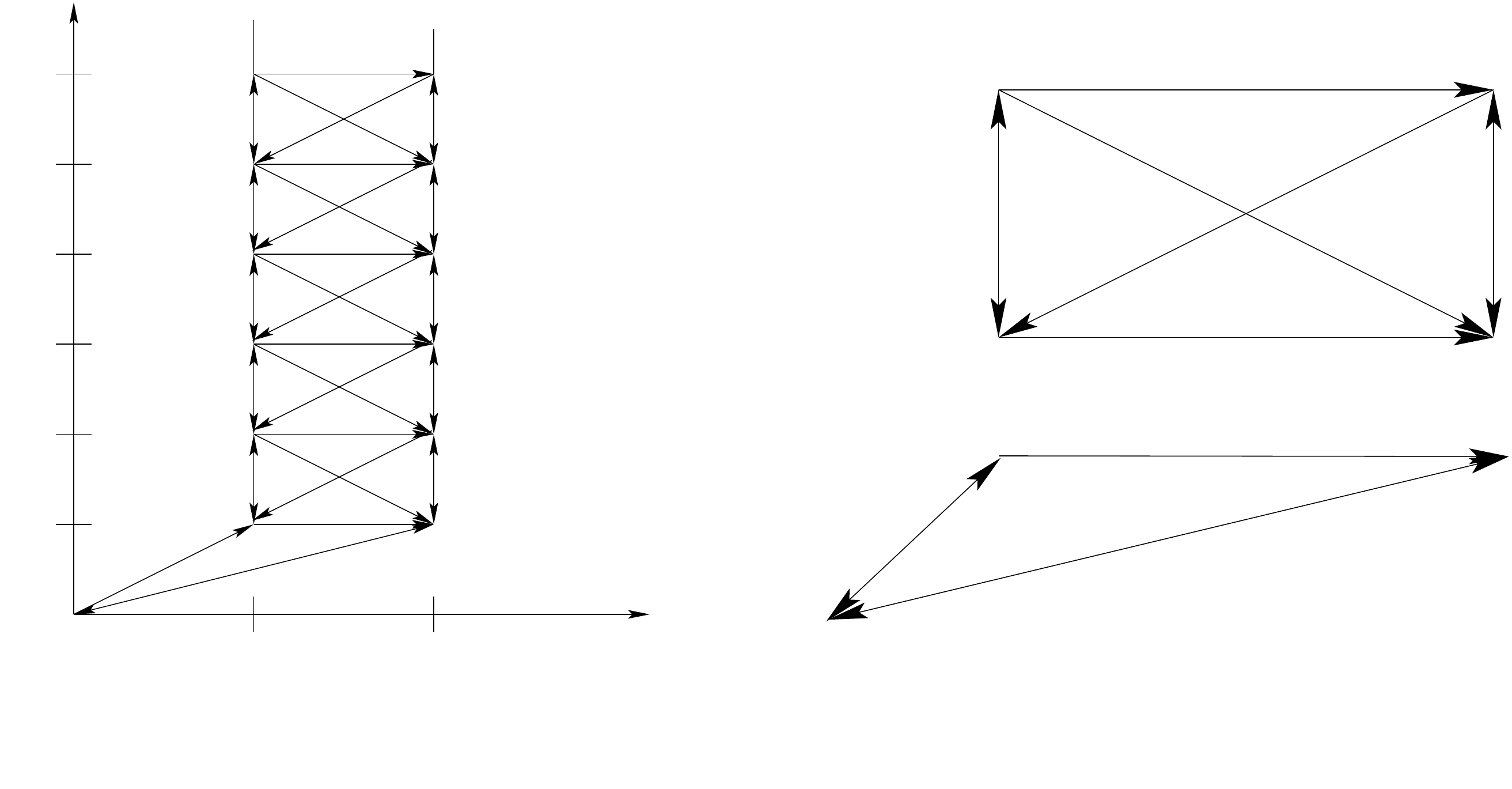_t}}
\end{center}
\caption{(a) State flow diagram of a queue with stochastic service rate. 
(b) Details of the transition rates.}\label{fig:singlejam}
\end{figure}

\subsection{Joined effective process $\bigl(n(t),\mu(t)\bigr)$}
Since the front end interface of the cluster has no causal effect on the rest of the queue, 
except for the front vehicle which may accelerate with rate $\gamma$, we can 
consider the dynamics of the sequence independently of the motion of the front interface.
Making the additional assumption of independence of the local speed labels 
in the bulk is rather crude regarding the results of the previous section, but 
nevertheless we look for a qualitative comparison.
To this end let us  write a master equation of the joint process $(n(t),\mu(t))$, i.e.
the equation governing the  evolution of the joint 
probability $P_t(n,\tau) = P(n(t)=n,\mu=\mu_a\tau+\mu_b\bar\tau)$, 
the joint probability that the queue has
$n$ clients and its front car is of type $A$ ($\tau=1$) or $B$ ($\bar\tau\egaldef 1-\tau =1$).
Given
\[
p_n \egaldef \frac{\lambda_a}{\lambda}r^n\qquad\text{with}
\qquad r\egaldef \frac{\lambda}{\lambda+\delta}.
\]
and $p_n(\tau) \egaldef p_n\tau+\bar p_n\bar\tau$, using $\bar p_n$ to denote $1-p_n$,
the master equation then reads
\begin{align*}
\frac{dP_t(n,\tau)}{dt} &= \lambda \bigl(P_t(n-1,\tau)-P_t(n,\tau)\bigr)
+ \bigl(\mu_aP_t(n+1,1)+\mu_bP_t(n+1,0)\bigr)p_n(\tau)\\[0.2cm]
&-(\mu_a\tau +\mu_b\bar\tau) P_t(n,\tau)
+\gamma(\tau -\bar\tau) P_t(n,0),\qquad n\ge 2
\end{align*}
\begin{align*}
\frac{dP_t(1,\tau)}{dt} &= (\lambda_a\tau+\lambda_b\bar\tau) 
P_t(0)-\lambda P_t(1,\tau)
+ \bigl(\mu_aP_t(2,1)+\mu_bP_t(2,0)\bigr)p_1(\tau)\\[0.2cm]
&-(\mu_a\tau +\mu_b\bar\tau )P_t(1,\tau)
+\gamma(\tau -\bar\tau) P_t(1,0),
\end{align*}
and
\[
\frac{dP_t(0)}{dt} = -\lambda P_t(0)
+ \mu_aP_t(1,1)+\mu_bP_t(1,0).
\]
It is a special case of a queueing process  with a $2$-level dynamically 
coupled stochastic service rate,
as defined in Section~\ref{sec:stoch-queues},
which state-graph is sketch on Figure~\ref{fig:singlejam}.
In the stationary regime, we denote
\[
\pi_{n}^{a} \egaldef P(n(t)=n,\mu = \mu_a)\qquad\text{and}\qquad \pi_{n}^{b} 
\egaldef P(n(t)=n,\mu=\mu_b)
\]
and $\pi_0 = P(n(t)=0)$.

\begin{prop}\label{prop:recurrence}
For $n>1$ the following recursion holds
\begin{equation}\label{eq:recurrence}
\left[\begin{matrix}
\pi_{n+1}^a\\[0.2cm]
\pi_{n+1}^b
\end{matrix}\right] 
= \frac{\lambda}{\mu_a\mu_b +\mu_a(\gamma+\lambda p_{n+1})+\lambda \mu_b\bar p_{n+1}}
\left[\begin{matrix}
\gamma+\mu_b+\lambda p_{n+1} &  \gamma+\lambda p_{n+1} \\[0.2cm]
\lambda\bar p_{n+1} & \mu_a+\lambda\bar p_{n+1}
\end{matrix}\right]
\left[\begin{matrix}
\pi_n^a\\[0.2cm]
\pi_n^b
\end{matrix}\right] 
\end{equation}
while for $n=1$ we have 
\begin{align*}
\pi_1^a &= \frac{\lambda_a \mu_b+\lambda\gamma+\lambda^2 p_1}{\mu_a\mu_b+\lambda p_1\mu_a
+\lambda \bar p_1\mu_b+\gamma \mu_a}\pi_0,\\[0.2cm]
\pi_1^b &= \frac{\lambda_b \mu_a+\lambda^2\bar p_1}
{\mu_a\mu_b+\lambda p_1\mu_a+\lambda\bar p_1\mu_b+\gamma \mu_a}\pi_0.
\end{align*}
\end{prop}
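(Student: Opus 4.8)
The plan is to read off the recursion directly from the stationary balance equations, the one subtlety being that the balance equations, as they stand, couple three consecutive levels ($n-1$, $n$, $n+1$) rather than two. First I would set the time derivatives in the master equation for $(n(t),\mu(t))$ to zero, obtaining the stationary equations for $\pi_n^a,\pi_n^b$ (and for $\pi_0$). Writing $F_n\egaldef\mu_a\pi_n^a+\mu_b\pi_n^b$ for the total departure flux out of level $n$, these equations involve $\pi_{n-1}^{a,b}$, $\pi_n^{a,b}$ and $F_{n+1}$, so they do not by themselves deliver a transfer matrix between two adjacent levels.

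The key step, and the one that unlocks everything, is a cut (global flux balance) argument. The chain is a quasi-birth--death process in the variable $n$: the only transitions that change $n$ are arrivals ($n\to n+1$ at rate $\lambda$, irrespective of the front-car label) and departures ($n+1\to n$ at rate $\mu_a$ or $\mu_b$), while the $\gamma$-transitions merely flip the label within a level. Balancing the stationary probability flux across the cut separating $\{n'\le n\}$ from $\{n'\ge n+1\}$ therefore gives
\[
\lambda\bigl(\pi_n^a+\pi_n^b\bigr)=\mu_a\pi_{n+1}^a+\mu_b\pi_{n+1}^b=F_{n+1},
\]
for every $n\ge 0$, with $\pi_0^a+\pi_0^b=\pi_0$. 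Equivalently, this identity follows by summing the two label-balance equations at level $n$ and telescoping, with the $n=0$ equation furnishing the base case.

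Then I would substitute this relation, taken one level higher so that $F_{n+2}=\lambda(\pi_{n+1}^a+\pi_{n+1}^b)$, into the two stationary bulk equations at level $n+1$. The reconstruction probability that appears there is $p_{n+1}$, since a departure from a size-$(n+2)$ queue leaves a size-$(n+1)$ queue whose new front vehicle is fast with probability $p_{n+1}$; this is exactly why the transfer matrix in~(\ref{eq:recurrence}) carries the index $n+1$. After using $p_{n+1}+\bar p_{n+1}=1$ the level $n+2$ unknowns drop out and one is left with the linear system
\[
\begin{bmatrix} \mu_a + \lambda\bar p_{n+1} & -(\gamma+\lambda p_{n+1}) \\ -\lambda \bar p_{n+1} & \mu_b + \gamma + \lambda p_{n+1}\end{bmatrix}
\begin{bmatrix}\pi_{n+1}^a\\[0.2cm]\pi_{n+1}^b\end{bmatrix}
=\lambda\begin{bmatrix}\pi_n^a\\[0.2cm]\pi_n^b\end{bmatrix}.
\]
It then remains to invert a $2\times 2$ matrix: its determinant simplifies, the two $\lambda^2p_{n+1}\bar p_{n+1}$ contributions cancelling, to $\mu_a\mu_b+\mu_a(\gamma+\lambda p_{n+1})+\lambda\mu_b\bar p_{n+1}$, which is precisely the denominator of~(\ref{eq:recurrence}); multiplying the adjugate by $\lambda$ reproduces the stated matrix. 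For the base case $n=1$ I would run the same elimination on the level-$1$ boundary equation, the only change being that the incoming flux now carries the asymmetric rates $\lambda_a$ and $\lambda_b$ in place of a single $\lambda\pi_1^{a,b}$; inverting the resulting system (same matrix form with $p_1$) and simplifying with $\lambda_a+\lambda_b=\lambda$ yields the closed expressions for $\pi_1^a$ and $\pi_1^b$ in terms of $\pi_0$.

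The main obstacle is really the cut-balance reduction in the second paragraph: without replacing $F_{n+2}$ by $\lambda(\pi_{n+1}^a+\pi_{n+1}^b)$ one is stuck with a genuine three-term vector recursion, and the clean two-term transfer matrix — together with the cancellation that produces the compact determinant — only materialises once that substitution has been made. Everything after that is routine linear algebra.
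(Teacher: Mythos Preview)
Your proposal is correct and follows essentially the same route as the paper: derive the partial balance $\lambda\pi_n=\mu_a\pi_{n+1}^a+\mu_b\pi_{n+1}^b$ (the paper obtains it by summing the two stationary equations and telescoping, which you also mention), substitute it into the bulk equations at level $n+1$ to collapse the three-level recursion to the $2\times2$ system you wrote, and invert. The determinant simplification and the $n=1$ boundary treatment are likewise identical in spirit.
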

\begin{proof} See Appendix~\ref{sec:appB}\end{proof}

Note that for large $n$, because of a finite acceleration rate $\gamma$, 
the effective service rate $\mu^{\infty}$ has a limit slightly above $\mu_b$. It is obtained as
\[
\mu^{\infty} = \mu_b + \frac{\eta}{1+\eta}(\mu_a- \mu_b),
\]
where $\eta$ is the limit ratio
\[
\eta = \lim_{n\to\infty}\frac{\pi_n^a}{\pi_n^b}
=\frac{1}{2\lambda}\bigl(\sqrt{\Delta}+\gamma-\lambda+\mu_b-\mu_a\bigr)
\qquad\text{with}\qquad \Delta \egaldef (\lambda-\gamma+\mu_a-\mu_b)^2+4\lambda\gamma
\]
obtained from (\ref{eq:recurrence}).
\[
\lambda = \sum_{n=1}^\infty\pi_n^a \mu_a +\pi_n^b \mu_b
\]
is automatically fulfilled by virtue of the partial balance equation 
(\ref{eq:pbalance}). 
Consider the generating functions 
\[
g_{a,b}(z) \egaldef \sum_{n=1}^\infty \pi_n^{a,b} z^n \qquad\text{and}\qquad g(z) 
\egaldef \pi_0+g_a(z)+g_b(z).
\]
\begin{prop}\label{theo:sol}
(i) $g(z)$ satisfies the functional equation of the type
\begin{equation}\label{eq:fxnal}
ug(rz) =\ -(z-z^+)(z-z^-)g(z)- vz+w 
\end{equation}
where $z^\pm$ are given by
\[
z^\pm = \frac{1}{2\lambda}\bigl(\mu_a+\mu_b+\lambda+\gamma \pm\sqrt{\Delta}\bigr),
\]
and
\begin{align*}
\lambda^2u = \lambda_a(\mu_a-\mu_b)\qquad
\lambda^2v = (\lambda_a\mu_a+\lambda_b \mu_b)\pi_0\qquad
\lambda^2w = (\mu_a\mu_b+\lambda_a \mu_a+\lambda_b \mu_b+\gamma \mu_a)\pi_0. 
\end{align*}
(ii) The solution reads: 
\begin{equation}\label{eq:sol}
g(z) = \sum_{n=0}^\infty (-u)^n\frac{w-vr^n z}{\prod_{k=0}^n(zr^k-z^+)(zr^k-z^-)}.
\end{equation}
\end{prop}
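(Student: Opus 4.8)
The plan is to obtain the functional equation (i) by turning the stationary balance relations into generating-function identities, and then to solve (ii) by iterating the resulting linear dilation equation $z\mapsto rz$. For (i) the first ingredient is the partial balance relation~(\ref{eq:pbalance}), which expresses the cut balance $\mu_a\pi_{n+1}^a+\mu_b\pi_{n+1}^b=\lambda\pi_n$ for all $n\ge 0$, with $\pi_n=\pi_n^a+\pi_n^b$ and the boundary case $\mu_a\pi_1^a+\mu_b\pi_1^b=\lambda\pi_0$ coming from the empty-queue equation. Multiplying by $z^{n+1}$ and summing over $n\ge 0$ collapses this into the single relation $\mu_a g_a(z)+\mu_b g_b(z)=\lambda z\,g(z)$. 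Combined with $g(z)=\pi_0+g_a(z)+g_b(z)$, I would solve for the two partial generating functions as affine functions of $g$, namely $g_a=\frac{(\lambda z-\mu_b)g+\mu_b\pi_0}{\mu_a-\mu_b}$ and $g_b=\frac{(\mu_a-\lambda z)g-\mu_a\pi_0}{\mu_a-\mu_b}$.

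The second ingredient is one of the per-speed stationary equations, say the balance for the slow ($\tau=0$) front states. Using the cut balance to eliminate the level-$(n+1)$ flux, the master equation for $P_t(n,0)$ reduces for $n\ge 2$ to $\lambda\pi_{n-1}^b+\lambda\pi_n^a-(\mu_b+\gamma)\pi_n^b=\lambda_a r^n\pi_n$, where the only $n$-dependence on the right enters through the geometric weight $r^n$ (recall $p_n=(\lambda_a/\lambda)r^n$). Multiplying by $z^n$ and summing over $n\ge2$ yields a second relation among $g_a(z),g_b(z)$ and $g(rz)$, the sum $\sum_n r^n\pi_n z^n$ being precisely what produces the dilated argument $rz$. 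Substituting the affine expressions for $g_a,g_b$ and clearing $\mu_a-\mu_b$, the coefficient of $g(z)$ becomes $\lambda(\lambda z-\mu_b)+(\lambda z-\mu_b-\gamma)(\mu_a-\lambda z)=-\lambda^2(z-z^+)(z-z^-)$, the factorisation following from $\lambda(z^++z^-)=\mu_a+\mu_b+\lambda+\gamma$ and $\lambda^2z^+z^-=\mu_a\mu_b+\lambda\mu_b+\gamma\mu_a$; the latter I would verify from the identity $(\mu_a+\mu_b+\lambda+\gamma)^2-\Delta=4(\mu_a\mu_b+\lambda\mu_b+\gamma\mu_a)$. The coefficient of $g(rz)$ is $\lambda_a(\mu_a-\mu_b)=\lambda^2u$, so dividing by $\lambda^2$ reproduces exactly the stated $u$, and the skeleton of~(\ref{eq:fxnal}) is in place.

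What remains is to pin down $w$ and $v$ from the boundary terms at $n=0,1$, and this is where I expect the only real risk of error. The constant part collects to $\bigl(\lambda\mu_b+\mu_a(\mu_b+\gamma)+\lambda_a(\mu_a-\mu_b)\bigr)\pi_0$, which using $\lambda_a+\lambda_b=\lambda$ equals $(\mu_a\mu_b+\lambda_a\mu_a+\lambda_b\mu_b+\gamma\mu_a)\pi_0=\lambda^2w$, matching the claim. For the coefficient of $z$ one must insert the explicit $\pi_1^a,\pi_1^b$ from Proposition~\ref{prop:recurrence}; the spurious $r$-dependence then cancels and the expression reduces to $-(\lambda_a\mu_a+\lambda_b\mu_b)\pi_0=-\lambda^2v$. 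This last cancellation is the main obstacle, being delicate bookkeeping rather than anything structural; the clean outcome of the constant term makes me confident it goes through.

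For (ii) I would treat~(\ref{eq:fxnal}) as a first-order linear functional equation under $z\mapsto rz$ and iterate. Setting $A_k(z)\egaldef(zr^k-z^+)(zr^k-z^-)$ and noting the shift $A_k(rz)=A_{k+1}(z)$, the equation reads $g(z)=\frac{w-vz}{A_0(z)}-\frac{u}{A_0(z)}\,g(rz)$; substituting the same relation for $g(rz)$, then $g(r^2z)$, and so on gives after $m$ steps the partial sum $\sum_{n=0}^{m-1}(-u)^n\frac{w-vr^nz}{\prod_{k=0}^nA_k(z)}$ plus a remainder $(-u)^m g(r^mz)/\prod_{k=0}^{m-1}A_k(z)$. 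Since $r<1$ we have $r^mz\to0$ and $g(r^mz)\to\pi_0$, while $A_k(z)\to z^+z^-\neq0$, so provided $|u|<|z^+z^-|$ (the stability regime of the queue) the prefactor decays geometrically, the remainder vanishes, and one is left with the series~(\ref{eq:sol}). The same estimate shows convergence near $z=0$ and uniqueness, since any two power-series solutions differ by a solution of the homogeneous equation whose iterated remainder is forced to zero. As a cross-check I would substitute~(\ref{eq:sol}) back into~(\ref{eq:fxnal}): using $A_k(rz)=A_{k+1}(z)$, the terms of $u\,g(rz)$ reindex exactly onto those of $-A_0(z)g(z)+w-vz$. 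The main obstacle in this part is the convergence and uniqueness argument, which rests on the parameter condition $|u|<|z^+z^-|$.
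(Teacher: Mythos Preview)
Your argument for part~(i) is correct and follows the same scheme as the paper: extract the partial balance relation~(\ref{eq:pbalance}) to express $g_a,g_b$ as affine functions of $g$, then use one of the per-speed stationary equations to produce the dilated argument $g(rz)$ through the geometric weight $r^n$ in $p_n$. The paper organises this slightly differently, passing first through the two-step recurrences~(\ref{eq:invreca}) and (\ref{eq:invrecb}) of Proposition~\ref{prop:recurrence} before converting to generating functions, but the content is identical and your identification of $z^+z^-$, $u$, and $w$ matches.

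For part~(ii) your route differs from the paper's. The paper introduces the infinite product
\[
G(z)=\prod_{n\ge0}\Bigl(1-\frac{r^nz}{z^+}\Bigr)\Bigl(1-\frac{r^nz}{z^-}\Bigr),
\]
which satisfies $G(rz)=\dfrac{z^+z^-}{(z-z^+)(z-z^-)}G(z)$ and acts as an integrating factor: writing $g=C/G$ turns~(\ref{eq:fxnal}) into $C(z)+\frac{u}{z^+z^-}C(rz)=\frac{(w-vz)G(z)}{(z-z^+)(z-z^-)}$, after which a telescoping sum over $(-u/z^+z^-)^nC(r^nz)$ yields~(\ref{eq:sol}). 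You instead iterate~(\ref{eq:fxnal}) directly on $g$, tracking the product $\prod_{k=0}^{m-1}A_k(z)$ as it accumulates and bounding the remainder $(-u)^m g(r^mz)/\prod_k A_k(z)$ via $A_k(z)\to z^+z^-$ and $g(r^mz)\to\pi_0$. Both arguments hinge on the same convergence condition $|u|<|z^+z^-|$ (which the paper simply states can be ``checked afterwards''); your version is a little more elementary since it avoids the auxiliary product $G$, while the paper's version makes the structure of the solution as a ratio by an explicit $q$-Pochhammer-type product more visible. Either way the verification step you mention, plugging~(\ref{eq:sol}) back and reindexing via $A_k(rz)=A_{k+1}(z)$, is a good sanity check that both routes share.
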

\begin{proof}See Appendix~\ref{sec:appB} for details.\end{proof}

\noindent The reduced generating functions $g_{a,b}$ are then obtained using 
(\ref{eq:pbalance}) to yield
\begin{equation}\label{eq:stoch-gen}
g_a(z) = \frac{\mu_b}{\mu_a-\mu_b}\pi_0 +\frac{\lambda z - \mu_b}{\mu_a-\mu_b}g(z),
\qquad\text{and}\qquad
g_b(z) = \frac{\mu_a}{\mu_b-\mu_a}\pi_0 +\frac{\lambda z - \mu_a}{\mu_b-\mu_a}g(z).
\end{equation}
From these it is then straightforward to obtain the $\pi_n^{a,b}$ upon using Cauchy integrals, as
sums of geometric laws.

\subsection{Mean field estimates}
If we consider then the closed tandem formed out of these effective queuing processes, 
the first observation is that the sufficient conditions given in section~\ref{sec:prodform}
may not be fulfilled. As such the large deviation estimation of the FD is going to 
be only approximate, a kind of mean field approximation, 
for a second reason, in addition to the simplifying assumption 
made on the internal cluster structure. 
\begin{figure}
\centering
\includegraphics[scale=0.3]{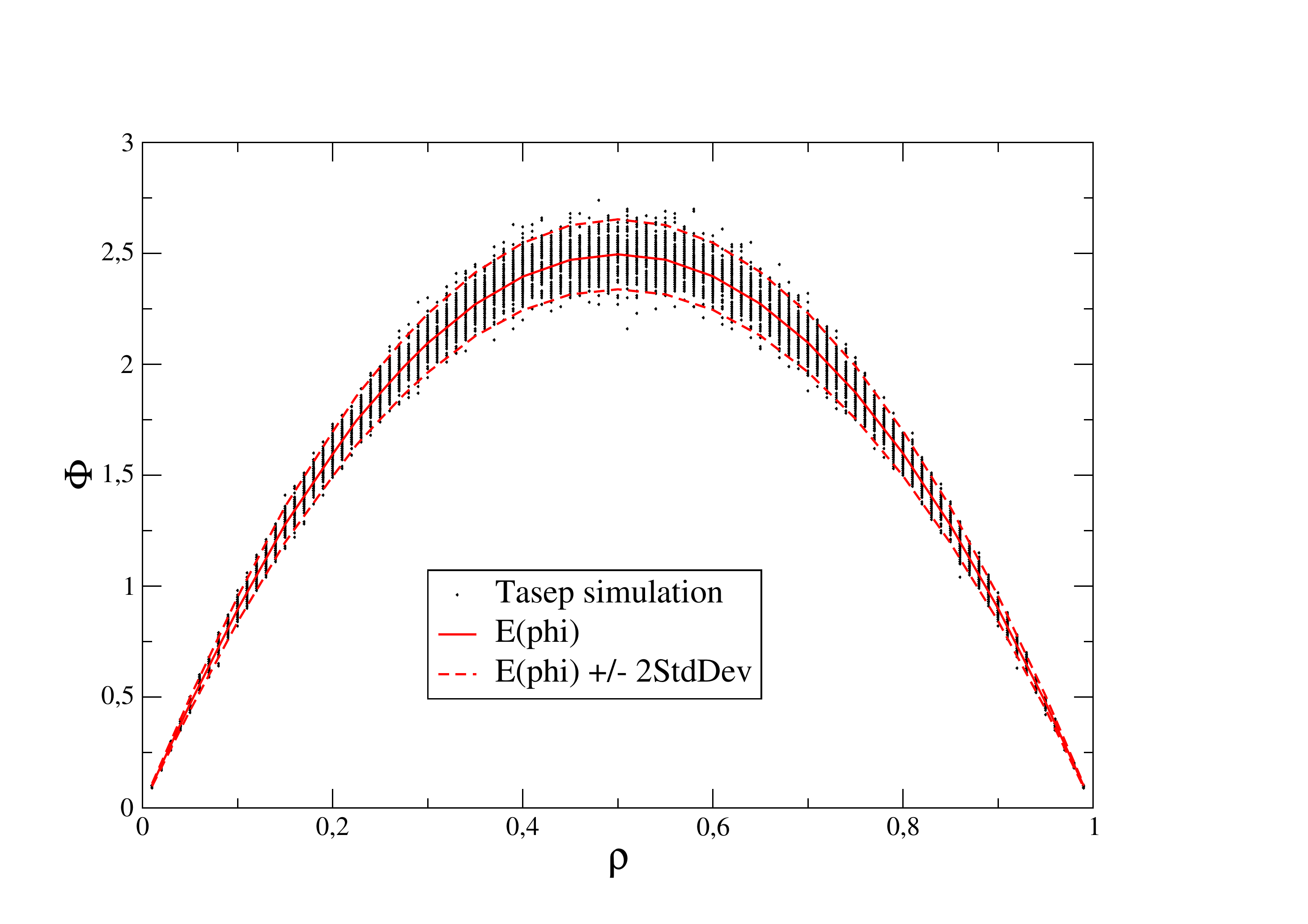}\\
(a)\\
\includegraphics[scale=0.3]{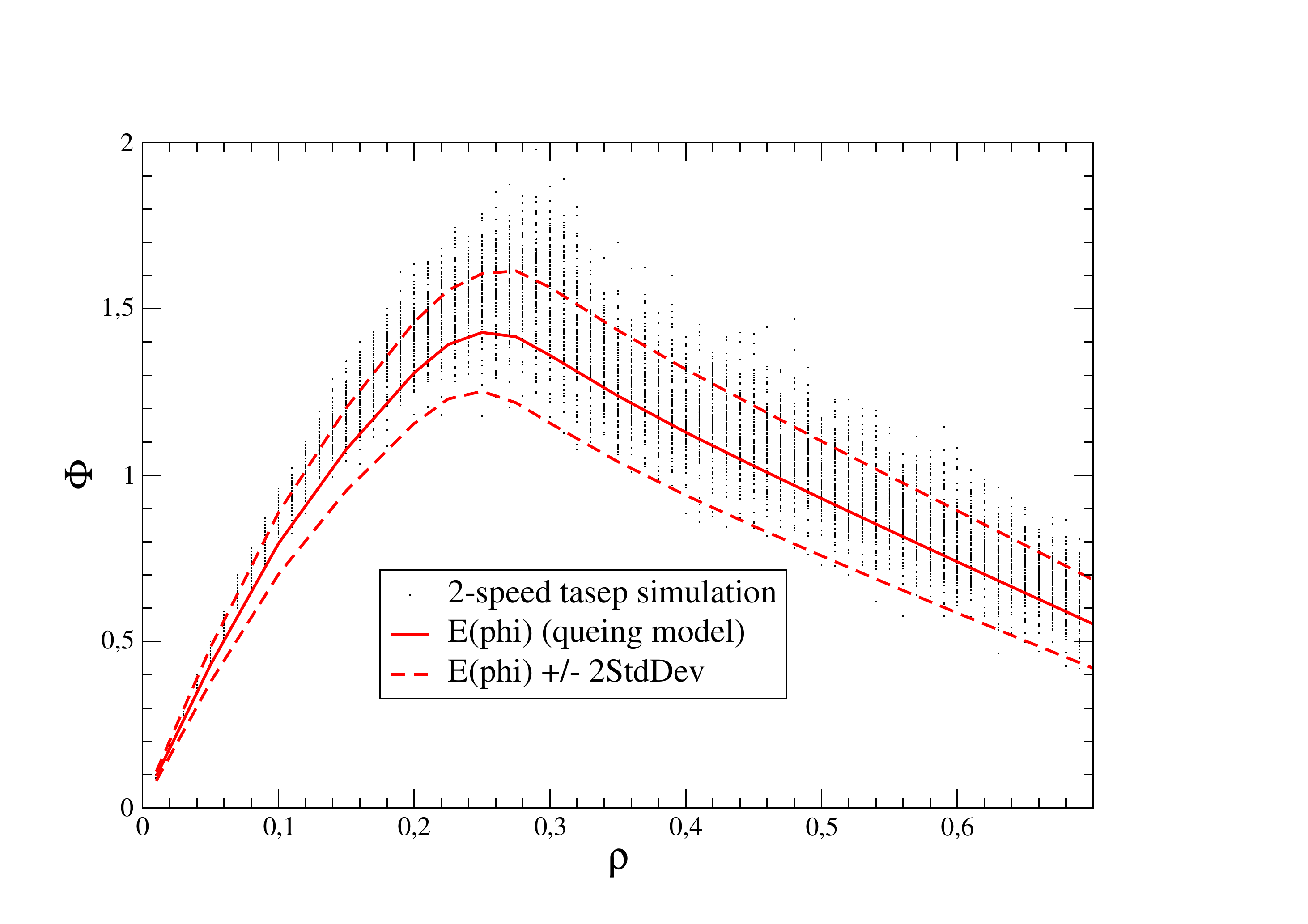}\\
(b)\\
\caption{\label{fig:fdiag} 
Comparison between the fundamental diagram  simulated with exclusion processes and the one 
obtained from the corresponding 
(approximate) queuing process. The size is $L=1000$, $\delta$ is set to zero to have a TASEP (a),
while the parameters are set to $\mu_a=10\times \mu_b=10\times \gamma=100\times \delta$ on panel (b).}
\end{figure}
\begin{figure}
\centering
\includegraphics[trim=0 0 100 0,width=0.7\linewidth]{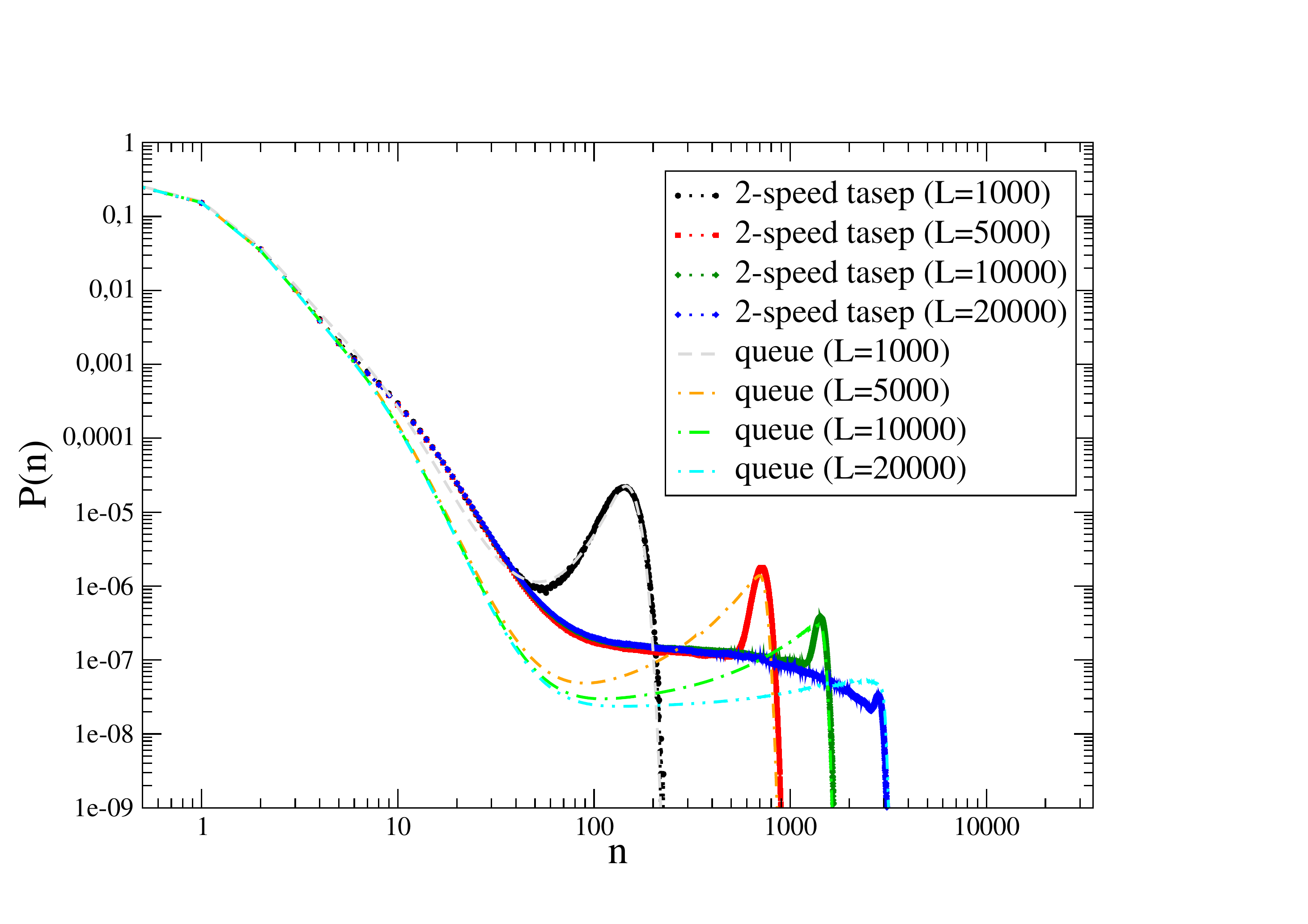}\\
(a)\\
\includegraphics[trim=0 0 100 0,width=0.7\linewidth]{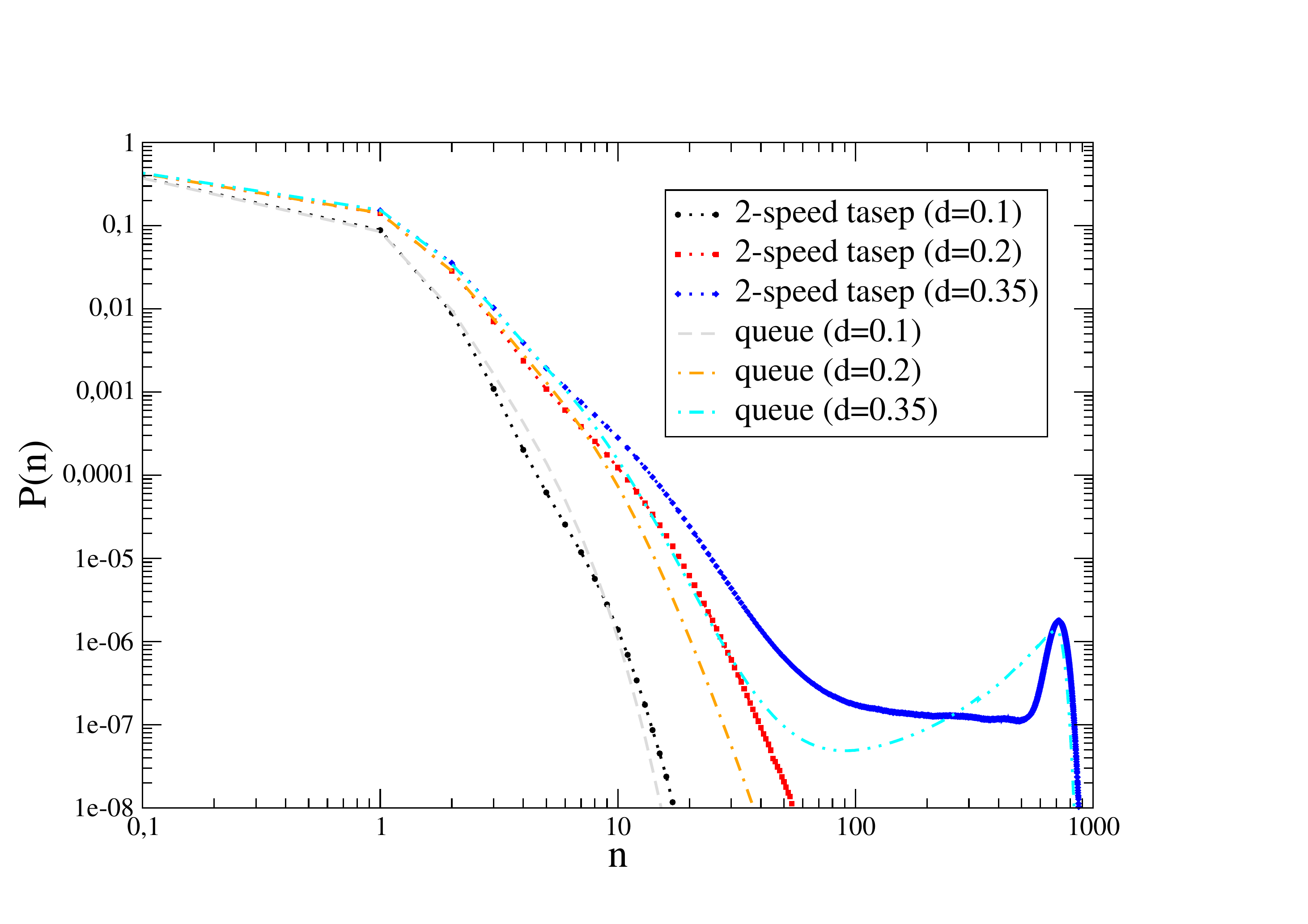}\\
(b)
\caption{\label{fig:mtasep-queue} Comparison of particle cluster vs queue's size  
distribution for set of 
parameters $\mu_a=10 \mu_b =100$, $\gamma = 10$ and $\delta=1$ for various sizes $L$ with fixed 
density $d=0.35$ (a) and various densities with fixed size $L=5000$ (b), the number of queues
being $(1-d)L$.}
\end{figure}

From (\ref{eq:stoch-gen}), we write down the cumulant generating function
\[
h(s,t) = \log\Bigl(\pi_0\bigl(1+\frac{\mu_be^{-\mu_at}-\mu_ae^{-\mu_bt}}{\mu_a-\mu_b}\bigr)+
\frac{(\mu_a-\lambda e^{-s})e^{-\mu_bt}+(\lambda e^{-s}-\mu_b)e^{-\mu_at}}{\mu_a-\mu_b}\ g(s)\Bigr).
\]
Therefore we get for the Hessian
\begin{align*}
&\det\bigl(H(s^\star,0)\bigr) = \\[0.2cm]
&\frac{1}{g^2}\left[
\begin{matrix}
\DD gg''-g'^2 & g\bigl(\lambda e^{-s^\star}g +(1-\lambda e^{-s^\star})g'\bigr) \\[0.2cm]
\DD g\bigl(\lambda e^{-s^\star}g +(1-\lambda e^{-s^\star})g'\bigr) & g\bigl( \mu_a\mu_b \pi_0 + 
(\lambda(\mu_a+\mu_b)e^{-s^\star}-\lambda^2-\mu_a\mu_b)g\bigr)
\end{matrix}\right],
\end{align*}
where $s^\star$ is the point which satisfies
\[
g'(s^\star) = -\rho.
\]
from this we can compute the small (Gaussian) fluctuations of the FD.
A comparison with  the corresponding ABTASEP is displayed on Figure~\ref{fig:fdiag}. 

In addition to the fundamental diagram, under the canonical ensemble constraint,
it is also interesting to determine  the single queue distribution. 
This can be obtained from the partition function~\cite{EvMaZi} in the large deviation framework as
\begin{align*}
p_{CE}(n,\mu) &= \pi^\lambda(n,\mu)\frac{Z_{L-1}(N-n)}{Z_L(N)}\\
\simeq  \pi^\lambda(n&,\mu)\exp\Bigl[L\bigl(h(s(d-x),0)-h(s(d),0)-
\frac{d-x}{1-d-x}s(d-x)+\frac{d}{1-d}s(d)\bigr)\Bigr],
\end{align*}
with $x\egaldef n/(N+L)$ and the density constraint
\[
\frac{\partial h}{\partial s}(s(d),0) = \frac{d}{1-d},
\]
satisfied by $s(d)$. A comparison of this queueing formulation with
the ABTASEP is given on Figure~\ref{fig:mtasep-queue}. In this
case we impose an additional self-consistency condition on the
parameter $\lambda_a$,
\[
\lambda_a  = \sum_{n=1}^\infty\pi_n^a \mu_a,
\]
which otherwise would be free. The correspondence between the cluster size
distribution observed on the ABTASEP with the single queue distribution
obtained from the generalized queueing process is rather accurate. In particular in both 
cases, a bump is observed in the distributions at the same location, for small
size systems. It indicates that condensation is observed as a finite size phenomena. 
In the thermodynamic limit macroscopic jams are absent. In this respect 
it is different from the type of condensation analyzed in~\cite{EvMaZi}, which are 
obtained under some conditions on the service rate, as a large deviation principle 
but with different scaling (speed in the large deviation terminology) then $L$.
Concerning the FD a larger discrepancy is observed between the ABTASEP process and 
its corresponding effective queueing process. The reason for this, which is not visible 
on the cluster distributions, is that the  ratio of fast over slow particles for
given size of cluster size  does not coincide. We don't know however which
approximation between 
\begin{itemize}
\item the simplified assumption on internal cluster structure
\item the neglect of correlations between queues
\end{itemize}
in the effective model is responsible for that. These results indicate anyway 
that the steady state of the ABTASEP might be well accounted for 
by a little bit more refined joint cluster measure.

\section{Conclusion and Perspectives}
In this paper, motivated by questions raised in the context of traffic modelling, 
we have proposed a simple extension in the definition of the TASEP model to take into 
account acceleration and  braking thereby offering the possibility to study the 
effect of asymmetry between the two mechanisms within a simple model. 
With two different speed levels a rich dynamics is already present. 
An effective mapping on a generalized zero-range process allows
to interpret at least qualitatively the spontaneous jamming phenomena occurring 
in this model for some choice of parameters. 
In addition we develop a large deviation formalism to study the fundamental
diagram associated to this family of queuing processes. Still the discrepancy between 
numerical simulation and estimation seen on Figures~\ref{fig:mtasep-queue} and \ref{fig:fdiag}, 
may originate from the various approximations we make, concerning the detail structure
of the jams and the correlations between queues which are neglected when using our large 
deviation estimation of the FD. This formalism could be possibly adapted to 
the situation where the joint measure of queues has not a product form. 

In the family of models that we considered  we found that condensation phenomena are associated to
finite size effect on the ring geometry, and at least numerically the presence of a large 
macroscopic jam at large scale seems doomed to decay exponentially with no bump in the 
probability distribution, and this is consistent with the approximate mapping on 
zero-range process that we propose. Nevertheless the situations could change when the 
number of speed levels is increased, as shown in our previous work~\cite{tgf07} on the 
subject, and we suspect that synchronized flow can take place in this limit of 
large number of speed level. 

The situation where vehicles may enter or leave the system
at very low rates so that the density may change adiabatically and allow  
to observed hysteresis effect deserves also more studies. In particular it would be interesting
to find a relevant measure associated to trajectories in the FD, based for example 
on Brownian  windings models proposed in~\cite{FaFu}, to quantify the hysteresis level 
in various regions of the FD.
 
Finally, the models considered here are limited to single lane traffic
and could be easily generalized to multi-lane, using coupled exclusion processes
like in~\cite{EvKaSuTa}.

\appendix
\section{Complement to Section~\ref{sec:gfluct}: constrained partition function
and the dual Hessian}\label{sec:appA}
In this appendix we explain the role played by the dual Hessian in the 
Large deviation expression of the constrained partition functions.
Assume that we have $n$ linear constraints, written as 
\[
Cy = V,
\]
where $C$ is $n\times d$ matrix, each line $C_i$ corresponds
to constraints $i$ and $V$ is a $n$-dimensional  vector  and that we want to estimate 
the constrained partition function $Z_L[V]$. 
To obtain the small fluctuations we approximate 
first $\cal F$ in (\ref{eq:ldzln}) at second order with respect to some reference
point $y^*$
\[
{\cal F}(y) = {\cal F}(y^*) + (y-y^*)^T\cdot\bigtriangledown{\cal F}(y^*) 
+(y-y^*)^T\bigtriangledown^2{\cal F}(y^*)(y-y^*)+o\bigl(\|y-y^*\|^2\bigr),
\]
so that $y$ considered as a $d$-dimensional vector is integrated over ${\mathbb R}^d$.
$y^*$ is chosen such that 
\[
y^*(\Lambda) = \argmin_y\ \Bigl[{\cal F}[y]- \Lambda^T Cy\Bigr],
\] 
matches the constraints with the proper Lagrange multipliers $\Lambda$.
Since in absence of constraints $Z_L[V]$, as it stands in (\ref{eq:znl}) is  
normalized to $1$, so at this order of approximation  the partition function 
reads 
\begin{align*}
Z_L[V] = L^{d/2}\sqrt{\frac{\det H}{(2\pi)^d}}\int d^dy  
\ &e^{- L\bigr({\cal F}(y^*) + (y-y^*)^T\cdot\bigtriangledown{\cal F}(y^*)+\frac{1}{2}(y-y^*)^T H(y-y^*)\bigr)}\\[0.2cm]
&\times\prod_{i=1}^n\delta(V_i-\sum_{j=1}^dC_{ij}y_j),
\end{align*}
with 
\[
H = \bigtriangledown^2{\cal F}(y^*),
\]
the Hessian taken at $y=y^*$. The dual Hessian is associated to the dual free energy ${\cal F}^\star$,
\[
{\cal F}^\star[\Lambda] = {\cal F}[y^*(\Lambda)]-\Lambda^T Cy.
\]
Using relations associated to the stationarity of $y^*$, it reads
\[
H^\star = C^TH^{-1}C.
\]
Due to the specific form (\ref{eq:freen}) of ${\cal F}$, the 
Hessian turns out to be the covariance matrix between 
the various quantities $V_i$, as given in (\ref{def:Hdual}). 
A convenient way to express the constraints is to write
\[
\prod_{i=1}^n\delta(V_i-\sum_{j=1}^dC_{ij}y_j) = \lim_{\alpha\to 0}
\frac{1}{(2\pi\alpha)^{n/2}} e^{-\frac{1}{2\alpha} \|V-Cy\|^2},
\]
so that a Gaussian integration over $y$ can be performed and since
\[
\|V-Cy\|^2 = \|C(y-y^*\|^2,
\]
we simply get 
\[
Z_L[V] = e^{- L{\cal F}(y^*)}
\lim_{\alpha\to 0} \sqrt{\frac{\det H}{(2\pi\alpha)^n\det H_\alpha}}
e^{\frac{L}{2}V_\alpha^T H_\alpha^{-1} V_\alpha}
\]
with 
\begin{align*}
H_\alpha &\egaldef  H+\frac{C^TC}{L\alpha}\\[0.2cm]
V_\alpha &\egaldef  \bigtriangledown{\cal F}(y^*).
\end{align*}
Let $P$ be the orthonormal projection on the subspace spanned by the set of 
constraint vectors $C_k$ an $\bar P\egaldef 1-P$, such that 
\[
C = CP\qquad\text{and}\qquad P\ C^T = C^T. 
\]
We have
\begin{align*}
\det(H_\alpha) &= \det(H)\det\bigl(1+\frac{C^TC}{L\alpha}H^{-1}\bigr)\\[0.2cm]
&=\det(H)\det\bigl(P(1+\frac{C^TC}{L\alpha}H^{-1})P+ P\frac{C^TC}{L\alpha}H^{-1}\bar P+\bar P\bar P\bigr)\\[0.2cm]
&=\det(H)\det_P\bigl(1+\frac{C^TC}{L\alpha}H^{-1}\bigr)\\[0.2cm]
&=\det(H)\Bigl(\frac{1}{(L\alpha)^n}\det(H^\star)+o\bigl(\frac{1}{\alpha^n}\bigr)\Bigr)
\end{align*}
where $\det_P$ corresponds to the block determinant associated to the subspace of constraints, 
and 
\[
H^\star = C^T H^{-1} C
\]
is the dual Hessian defined in (\ref{def:Hdual}).
Concerning $V_\alpha$, since $y^*$ is a stationary point at least w.r.t. fluctuations
orthogonal to the constraints, so we have
\[
\bar P V_\alpha = 0.
\]
Since
\begin{align*}
H_\alpha^{-1} &= H^{-1}\bigl(1+\frac{C^TC}{L\alpha}H^{-1}\bigr)^{-1} \\[0.2cm]
& = H^{-1}\bigl(PL\alpha H(C^TC)^{-1}P-P\bar P+\bar P\bar P\bigr)+o(\alpha),
\end{align*}
as a result 
\[
\lim_{\alpha\to 0}\ V_\alpha^T H_\alpha^{-1} = 0.
\]
So finally the Gaussian estimate of the partition function reads
\[
Z_L[V] = \frac{L^{n/2}}{\sqrt{(2\pi)^n\det(H^\star)}}e^{-L{\cal F}(y^*)}
\]
which yield in particular the expressions (\ref{eq:ZLN},\ref{eq:ZLNPhi}) up to a 
factor $L^n$ caused by a different convention in the constraint definition.

\section{Complement to Section~\ref{sec:single-jam}: steady state of the queueing process}\label{sec:appB}
From the master equations, taken at steady-state, we get the recurrence for $n\ge 2$
\begin{align*}
\lambda\bigl(\pi_{n-1}^a-\pi_n^a\bigr)+\bigl(\mu_a\pi_{n+1}^a+\mu_b\pi_{n+1}^b\bigr)p_n
-\mu_a\pi_n^a+\gamma\pi_n^b = 0,\\[0.2cm]
\lambda\bigl(\pi_{n-1}^b-\pi_n^b\bigr)+\bigl(\mu_a\pi_{n+1}^a+\mu_b\pi_{n+1}^b\bigr)\bar p_n
-\mu_b\pi_n^b-\gamma\pi_n^b = 0.
\end{align*}
For $n=1$, it reads
\begin{align*}
\lambda_a\pi_0-\lambda\pi_1^a+\bigl(\mu_a\pi_2^a+\mu_b\pi_2^b\bigr)p_1
-\mu_a\pi_1^a+\gamma\pi_1^b = 0,\\[0.2cm]
\lambda_b\pi_0-\lambda\pi_1^b+\bigl(\mu_a\pi_2^a+\mu_b\pi_2^b\bigr)\bar p_1
-\mu_b\pi_1^b-\gamma\pi_1^b = 0.
\end{align*}
The sum of the two equation gives that the quantity 
$
\lambda \pi_n - \bigl(\mu_a\pi_{n+1}^a+\mu_b\pi_{n+1}^b\bigr)
$
is a constant independent of $n$ which has to vanish, 
leading to the partial balance relation
\begin{equation}\label{eq:pbalance}
\lambda\pi_n = \mu_a\pi_{n+1}^a+\mu_b\pi_{n+1}^b,\qquad\forall n\ge 0.
\end{equation}
Using this gives the  relations, 
\begin{align}
\bigl(\mu_a+\lambda\bar p_{n+1}\bigr)\pi_{n+1}^a- (\gamma+\lambda p_{n+1})\pi_{n+1}^b &= 
\lambda\pi_n^a \label{eq:invreca}\\[0.2cm] 
-\lambda\bar p_{n+1}\pi_{n+1}^a + (\gamma+\mu_b+\lambda p_{n+1})\pi_{n+1}^b &= \lambda\pi_n^b.\label{eq:invrecb}
\end{align}
The recurrence is then inverted to yield~(\ref{eq:recurrence}).

Recurrence (\ref{eq:invreca},\ref{eq:invrecb}) are then conveniently used to obtain
the following relations among the generating functions, 
\begin{align*}
\lambda_a g_a(rz) + \lambda_a g_b(rz) &= 
(\mu_a+\lambda-\lambda z)g_a(z)- \gamma g_b(z) - \lambda_a \pi_0 z\\[0.2cm]
\lambda_a g_a(rz) + \lambda_a g_b(rz) &= \lambda g_a(z)- (\gamma+\mu_b-\lambda z) g_b(z) + \lambda_b\pi_0 z.
\end{align*}
It follows that
\[
g_b(z) = \frac{\lambda z}{\mu_b - \lambda z}\pi_0 + \frac{\lambda z-\mu_a}{\mu_b - \lambda z} g_a(z), 
\]
also related to (\ref{eq:pbalance}), which rewrites
\[
g_a(z) = \frac{\mu_b}{\mu_a-\mu_b}\pi_0 +\frac{\lambda z - \mu_b}{\mu_a-\mu_b}g(z),\qquad\text{and}\qquad
g_b(z) = \frac{\mu_a}{\mu_b-\mu_a}\pi_0 +\frac{\lambda z - \mu_a}{\mu_b-\mu_a}g(z).
\]
So finally $g(z)$ satisfies the functional equation given in proposition~\ref{theo:sol}.
The solution is constructed as follows. Consider the infinite product
\[
G(z) \egaldef \prod_{n=0}^\infty \bigl(1-r^n\frac{z}{z^+}\bigr)\bigl(1-r^n\frac{z}{z^-}\bigr).
\]
We have
\begin{equation}\label{eq:grz}
G(rz) = \frac{z^-z^+}{(z-z^-)(z-z^+)}G(z), 
\end{equation}
so a solution of the form
\[
g(z) = \frac{C(z)}{G(z)}
\]
has to verify
\[
C(z)+ \frac{u}{z^+z^-}C(rz) = \frac{w-vz}{(z-z^-)(z-z^+)}G(z).
\]
Equivalently, for any $n\ge 0$ this rewrites
\begin{align*}
\bigl(-\frac{u}{z^+z^-}\bigr)^n C(r^nz)- \bigl(-\frac{u}{z^+z^-}\bigr)^{n+1} C(r^{n+1}z) 
&= \bigl(-\frac{u}{z^+z^-}\bigr)^n \frac{w-vzr^n}{(zr^n-z^-)(zr^n-z^+)}G(r^nz)\\[0.2cm]
&= (-u)^n\frac{w-vzr^n}{\prod_{k=0}^n(zr^k-z^-)(zr^k-z^+)}G(z),
\end{align*}
after using relation (\ref{eq:grz}). 
Taking the sum leads to the solution~(\ref{eq:sol}), owing to
\[
\lim_{n\to\infty}\bigr(-\frac{u}{z^+z^-}\bigl)^nC(r^n z) = 0,
\]
which can be checked afterwards.

\paragraph{Acknowledgments} 
We thank Guy Fayolle for useful discussions. 
This work was supported by the French National Research Agency (ANR) grant No ANR-08-SYSC-017.

{\small \bibliography{fdiag}}
\bibliographystyle{unsrt}

\end{document}